\newcommand{\ud}{\mathrm{d}}
\newtheorem{theorem}{Theorem}
\newtheorem{proposition}{Proposition}
\begin{document}

\ifpdf
\DeclareGraphicsExtensions{.pdf, .jpg, .tif}
\else
\DeclareGraphicsExtensions{.eps, .jpg}
\fi

\thispagestyle{empty}

%\maketitle

\vspace*{3cm}

\begin{center}
\begin{Large}
\textbf{A Dynamic Model
for
Credit Index Derivatives}
\end{Large}

\vspace{15mm} {\bf Louis Paulot}

\vspace{7mm}
\emph{Sophis
\\
24--26 place de la Madeleine, 75008 Paris, France}

\vspace{5mm}
{\ttfamily louis.paulot@sophis.net}

\vspace{2cm}

November 2009

\end{center}

\vspace{3cm}
\hrule
\begin{abstract}
We present a new model for credit index derivatives, in the \emph{top-down} approach. This model has a dynamic loss intensity process with volatility and jumps and can include counterparty risk. It handles CDS, CDO tranches, N\textsuperscript{th} -to-default and index swaptions. Using properties of affine models, we derive closed formulas for the pricing of index CDS, CDO tranches and N\textsuperscript{th}-to-default. For index swaptions, we give an exact pricing and an approximate faster method. We finally show calibration results on 2009 market data.
\end{abstract}
\hrule

\vspace{\stretch{1}}
\noindent{\bf Keywords:} Credit Risk, CDO, Option, Dynamic Model, Affine Model.

\pagebreak

\tableofcontents

\pagebreak

\parskip=4pt

\section{Introduction}

Credit derivatives allow to transfer some credit risk: an investor can get or sell protection on a debt issuer or on a basket of different names. Most popular credit basket derivatives are Credit Default Swaps (CDS), where one gets protection on a whole basket against some premium paid on a regular basis. For CDO tranches, the protection does not cover losses below some trigger or above a second one. In N\textsuperscript{th}-to-default, the protection concerns precisely the N\textsuperscript{th} default on the basket. An other kind of derivatives are CDS options (swaptions): one gets the right to enter at a future date in a CDS at a given premium rate. Depending on the market price of protection at the option maturity, the option is exercised or not.

There are different risks in credit derivatives. The risk of default should be the more obvious. A second kind of risk are movements in the market price of CDS. When the derivative is written on a basket, risks comes from default risk and spread risk on underlyings and also depends on correlation between names.

There is a first distinction between credit models. Some models are \emph{static}: the intensity of default of a name has no stochastic behaviour with respect to time evolution. It is sufficient to price CDS or CDO tranches, but such models cannot handle options: the spread risk is not taken into account. \emph{Dynamic} models incorporate this risk, generally by making the default intensity stochastic.

There is an other splitting between credit basket models. In \emph{bottom-up} models, credit basket derivatives prices and risks are reconstructed from individual constituents and their correlations \cite{li2000default,duffie2001risk}. An alternative approach describes directly the risks on the basket \cite{errais2009affine,schonbucher2006portfolio,brigo2007calibration,arnsdorf2009BSLP}. It is sufficient if one wants to hedge derivatives on a credit index with the index itself or other derivatives and not with individual components. If one is concerned with individual names risk, a second step is necessary, such as the \emph{random thinning} \cite{giesecke2009topdown}.

We introduce a model which belongs to the second category: we model directly defaults on the basket. Our model is also dynamic: the intensity of default is a dynamic, stochastic variable. More precisely, the intensity of default of the basket is modeled by a Cox-Ingersoll-Ross process with jumps. A diffusion term describes short movements of the credit index spread. Jumps intend to take into account possible crises where the perceived credit quality of the basket decreases by a large amount. After crises, the intensity can relax to a lower value. We also introduce a probability for the default of the whole basket. Our model can handle counterparty risk: the counterparty can be correlated to the basket or can even be one of the basket names.

We define first a model on an infinite pool of names where default events occur according to a stochastic default intensity, without any bound on the total number of defaults. In a second time we use this infinite pool model to model defaults on a finite basket. Using properties of \emph{affine models} \cite{duffie2000transform}, we get closed formulas for the (conditional) expected loss or tranches losses. Usual credit derivatives can be priced exactly: CDS, CDO tranches, N\textsuperscript{th}-to-default, index CDS options. It takes reasonable time for most instruments. However for swaptions the exact pricing can be slower so we introduce some approximations which reduces highly the computation time.

We describe our model in section~\ref{sec-model} and the pricing of usual credit derivatives in section~\ref{sec-pricing}. We finally show numerical results of calibration on market data in section~\ref{sec-calibration}.

\section{Model}
\label{sec-model}

\subsection{Infinite pool}

\subsubsection{Description of the model}
\label{sec:model}

The credit portfolio in modeled in a \emph{top-down} approach: the portfolio loss is considered directly. Probabilities are considered to be risk-neutral probabilities. We consider first a portfolio of infinite size. Its loss $\widetilde{L}_t$ is a stochastic process with jumps $l\, \ud J$ which occur with an intensity $\lambda_t$ and size $l$:
\begin{equation*}
\ud \widetilde{L}_t = l \, \ud J
\end{equation*}
where $\ud J$ are jumps of size 1.
The recovery rates $R$ of defaults are independent and identically distributed. We denote by $\ud \nu_J(l)$ the distribution of the jump size $l=1-R$.
In parallel, we have a stochastic process $\widetilde{N}_t$ counting the number of default. It is similar to $\widetilde{L}_t$ but with zero recovery rate:
\begin{equation*}
\ud \widetilde{N}_t = \ud J
\rlap{\ .}
\end{equation*}

We want to build a model for the default intensity process with the following features:
\begin{itemize}
\item a diffusion process describing small fluctuations of the CDS index spread;
\item positive jumps of the intensity which can produce episodes of crisis with many defaults on some timescale;
\item a mean reversion such that crises can relax.
\end{itemize}

Mathematically, the intensity $\lambda_t$ will follow a stochastic process like
\begin{equation*}
\ud \lambda_t = \kappa (\lambda_\infty - \lambda_t) \ud t + \sigma(\lambda_t) \ud W + \ud X
\end{equation*}
where $\ud W$ is a Brownian motion and $\ud X$ a jump process with intensity $\gamma$.

It appears that this model has analytic solutions if we make some further assumptions on the diffusion and jump process:
\begin{itemize}
\item the volatility is proportional to the square root of $\lambda$: $\sigma(\lambda_t) = \sigma \sqrt{\lambda_t}$, as in a Cox-Ingersoll-Ross process;
\item the jump size are independent and identically distributed along a Gamma distribution of parameters $n$ and $\theta$ where $n$ is an integer which describes the shape of the distribution and the mean size of jumps is $(n+1)\theta$:
\begin{equation}
\ud \nu_X(x) = \frac{x^n e^{-x/\theta}}{n! \theta^{n+1}} \mathbbm{1}_{x > 0}
\rlap{\ ;}
\label{dNuX}
\end{equation}
\item all processes are independent of each other (except the fact that $\lambda_t$ is the intensity of jumps $\ud J$).
\end{itemize}
The justification of this set of assumption is computational tractability. This choice encompasses the features we want to model without restricting the model too much. In fact one feature we might want to include is missing: here jumps in the hazard rate are purely exogenous, there is no default contagion, as in \cite{errais2009affine}. Including a jump term proportional to $\ud \widetilde{L}_t$ for the intensity process requires numerical integration of differential equations, which is tractable but much slower than analytical integration. A few tests do not indicate that including default contagion changes the loss distributions and therefore the CDO prices.

The default intensity process $\lambda_t$ is thus supposed to follow a Cox-Ingersoll-Ross process with additional jumps with intensity $\gamma$ and size distributed along a Gamma distribution (equation~\eqref{dNuX}):
\begin{equation*}
\ud \lambda_t = \kappa (\lambda_\infty - \lambda_t) \ud t + \sigma \sqrt{\lambda_t} \ud W + \ud X
\rlap{\ .}
\end{equation*}
Parameters $\kappa$, $\lambda_\infty$, $\sigma$, $n$, $\gamma$ and $\theta$ can be taken to be constant or piecewise constant. The model parametrization contains in addition the initial default intensity $\lambda_0$.

Experience shows that calibration of the model without a probability of default of all names tends to incorporate it by setting a very high jump size in $\ud X$. We therefore introduce explicitly such a possibility. We introduce a variable $Q$ which jumps of one unit when all names default together, with intensity $\alpha \lambda_t + \beta$. $Q\neq0$ models a default of the whole portfolio. Parameters $\alpha$ and $\beta$ control the probability of such an event and its correlation to the short-term credit quality of the basket modeled by $\lambda$. $\alpha$ gives a contribution independent of the current state of the system whereas $\beta$ gives a higher probability of total default during crises.

In order to handle counterparty risk, we introduce finally a variable $R$ which jumps of one unit when the counterparty defaults. It jumps with probability $\xi$ when there is a jump in $\ud J$ and additionally with intensity $\zeta \lambda_t + \eta$ at any time. When the counterparty belongs to the set of index names, $\zeta$ and $\eta$ are set to zero and $\xi$ controls the probability that a default concerns the counterparty. When the counterparty is not one of the names, $\xi$ is set to zero, $\zeta$ and $\eta$ control the intensity of a default of the counterparty and its correlation with defaults in the basket.

The intensity process jump distribution $\ud \nu_X(x)$ given in \eqref{dNuX} can be more general than a pure gamma distribution by considering the sum of several gamma distributions with different parameters $\gamma^{(i)}$, $n^{(i)}$ and $\theta^{(i)}$. If $\theta$ is kept constant, one can for example take a distribution of the form $P(x) e^{-x/\theta}$ where $P$ is any polynomial. For clarity we write formulas for only one gamma distribution; the generalization is straightforward.

\subsubsection{Characteristic function}

\paragraph{Theorem}

\begin{theorem}
\label{thm}
For constant parameters, the characteristic function at date $t$ in the four variables $\widetilde{L}_T$, $\widetilde{N}_T$, $\lambda_T$ and $Q_T$ for complex variables $u$, $v$, $w$ and $x$ is
\begin{equation*}
\mathbb{E}_t \left( e^{u \widetilde{L}_T + v \widetilde{N}_T + w \lambda_T + x Q_T + y R_T } \right) = e^{A(t,T;u,v,w,y) + B(t,T;u,v,w,y) \lambda_t + u \widetilde{L}_t + v \widetilde{N}_t + x Q_t + y R_t}
\end{equation*}
with\footnote{To simplify notation we drop the explicit dependency of functions $A$, $B$, $B_\pm$, $\chi$ and $\psi$ in variables $t$, $T$, $u$, $v$, $w$, $x$ and $y$.}
\begin{equation*}
B = B_- + \frac{B_+ - B_-}{1-\frac{1}{\chi}}
\label{solB}
\end{equation*}
\begin{equation*}
B_\pm = \frac{\kappa \pm \sqrt{\kappa^2 + 2 \sigma^2 \psi}}{\sigma^2}
\end{equation*}
\begin{equation*}
\chi = \frac{w - B_-}{w - B_+} e^{-\sqrt{\kappa^2 + 2 \sigma^2 \psi } \, (T-t)}
\end{equation*}
\begin{equation*}
\psi =  1-e^v \phi_J(u) \big[ 1 - \xi (1-e^y) \big] + \alpha (1-e^x) + \zeta (1-e^y)
\end{equation*}
\begin{equation*}
\begin{array}{rcl}
A &=& \displaystyle \frac{1}{\frac{1}{2} \sigma^2 (B_+-B_-)} \times \sum_\pm \Bigg(
\begin{array}[t]{l}
\displaystyle \pm \left[ \kappa \lambda_\infty B_\pm + \gamma \left( \frac{1}{(1-\theta B_\pm)^{n+1}} -1 \right) \right] \ln\!\left( \frac{B-B_\pm}{w-B_\pm} \right)
\\
\displaystyle \mp \gamma \frac{1}{(1-\theta B_\pm)^{n+1}} \ln\!\left( \frac{1-\theta B}{1-\theta w} \right)
\\
\displaystyle \pm \gamma \sum_{k=1}^n \frac{1}{k} \frac{1}{(1-\theta B_\pm)^{n-k+1}} \left[\frac{1}{(1-\theta B)^k} - \frac{1}{(1-\theta w)^k}\right]\Bigg)
\end{array}
\\
&& \displaystyle - \beta (1-e^x) (T-t) - \eta (1-e^y) (T-t) 
\rlap{\ .}
\end{array}
\end{equation*}
$\phi_J(z)$ is the characteristic function of the individual loss size:
\begin{equation*}
\phi_J(z) = \int e^{z l}\ud \nu_J(l) \rlap{\ .}
\end{equation*}
\end{theorem}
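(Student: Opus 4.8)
The plan is to treat the five-dimensional process $(\widetilde{L},\widetilde{N},\lambda,Q,R)$ as an affine jump-diffusion and prove the formula by the standard transform method: posit an exponential-affine ansatz, turn the tower property into a generator equation, and reduce it to a Riccati ODE for $B$ together with a quadrature for $A$. Concretely, I would set $M_t=\mathbb{E}_t\big(e^{u\widetilde{L}_T+v\widetilde{N}_T+w\lambda_T+xQ_T+yR_T}\big)$ and guess $M_t=f$ with $f=\exp\big(A(t,T)+B(t,T)\lambda_t+u\widetilde{L}_t+v\widetilde{N}_t+xQ_t+yR_t\big)$. This form is natural because $\widetilde{L},\widetilde{N},Q,R$ have no autonomous dynamics; they enter only as accumulators whose jump intensities are affine in $\lambda$, so only $\lambda$ needs a nontrivial coefficient. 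The terminal condition $M_T=e^{u\widetilde{L}_T+\cdots}$ forces $B(T,T)=w$ and $A(T,T)=0$.

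Since $M_t$ is a martingale, I would apply the Dynkin/It\^o formula for jump-diffusions and impose $\mathcal{L}f=0$, where the generator collects: the CIR drift and diffusion $\kappa(\lambda_\infty-\lambda)\partial_\lambda f+\tfrac{1}{2}\sigma^2\lambda\,\partial_\lambda^2 f$; the loss jumps $\ud J$ of intensity $\lambda$, which simultaneously move $\widetilde{L}$ by $l\sim\nu_J$, $\widetilde{N}$ by one, and (with probability $\xi$) $R$ by one, contributing $\lambda\big[e^v\phi_J(u)(1-\xi(1-e^y))-1\big]f$; the intensity jumps $\ud X$ of intensity $\gamma$, whose Gamma law has moment generating function $(1-\theta B)^{-(n+1)}$, contributing $\gamma\big[(1-\theta B)^{-(n+1)}-1\big]f$; the total-default jumps of $Q$ with intensity $\alpha\lambda+\beta$, contributing $(\alpha\lambda+\beta)(e^x-1)f$; and the extra counterparty jumps of $R$ with intensity $\zeta\lambda+\eta$, contributing $(\zeta\lambda+\eta)(e^y-1)f$. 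Dividing by $f$ and separating the terms proportional to $\lambda$ from the $\lambda$-independent terms gives two equations: the $\lambda$-coefficient yields the Riccati equation $\partial_t B=\kappa B-\tfrac{1}{2}\sigma^2 B^2+\psi$, where $\psi$ is exactly the quantity in the statement, and the constant part yields $\partial_t A=-\kappa\lambda_\infty B-\gamma\big[(1-\theta B)^{-(n+1)}-1\big]-\beta(e^x-1)-\eta(e^y-1)$.

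To solve the Riccati equation I would locate its stationary points, i.e. the roots of $\tfrac{1}{2}\sigma^2 B^2-\kappa B-\psi=0$, which are exactly $B_\pm=(\kappa\pm\sqrt{\kappa^2+2\sigma^2\psi})/\sigma^2$. Writing the ODE as $\partial_t B=-\tfrac{1}{2}\sigma^2(B-B_+)(B-B_-)$ and setting $g=(B-B_+)/(B-B_-)$ linearizes it to $\partial_t g=-\sqrt{\kappa^2+2\sigma^2\psi}\,g$ (using $\tfrac{1}{2}\sigma^2(B_+-B_-)=\sqrt{\kappa^2+2\sigma^2\psi}$); integrating with $g(T)=(w-B_+)/(w-B_-)$ and inverting shows that $\chi:=1/g$ equals the stated expression and that $B=(\chi B_+-B_-)/(\chi-1)=B_-+(B_+-B_-)/(1-1/\chi)$.

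Finally I would obtain $A$ by integrating its ODE from $t$ to $T$ with $A(T,T)=0$. The two constant terms give $\beta(e^x-1)(T-t)+\eta(e^y-1)(T-t)$ at once, matching $-\beta(1-e^x)(T-t)-\eta(1-e^y)(T-t)$. For the $B$-dependent terms I would change the integration variable from $s$ to $b=B(s)$ via $\ud s=\ud b/[-\tfrac{1}{2}\sigma^2(b-B_+)(b-B_-)]$, reducing everything to rational integrals in $b$ from $B(t)$ to $w$, evaluated by partial fractions; the residues at the simple poles $b=B_\pm$ produce the $\ln\big((B-B_\pm)/(w-B_\pm)\big)$ terms with the coefficients shown (the sign flip in the logarithm argument absorbing the minus sign from the change of variables). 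I expect the genuine obstacle to be the $\gamma$-term: decomposing $1/[(1-\theta b)^{n+1}(b-B_\pm)]$ requires handling a pole of order $n+1$ at $b=1/\theta$, which is precisely what generates the $\ln\big((1-\theta B)/(1-\theta w)\big)$ term and the finite sum $\sum_{k=1}^{n}\tfrac{1}{k}(1-\theta B_\pm)^{-(n-k+1)}\big[(1-\theta B)^{-k}-(1-\theta w)^{-k}\big]$. Tracking the $\pm$ signs and the common prefactor $1/[\tfrac{1}{2}\sigma^2(B_+-B_-)]$ correctly through this decomposition is the delicate part; everything else is routine.
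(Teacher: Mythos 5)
Your proposal is correct and follows essentially the same route as the paper: an exponential-affine ansatz reduced via the martingale/generator condition to the Riccati equation for $B$ with the stated $\psi$ and a quadrature for $A$, the Riccati solved through the roots $B_\pm$ (your substitution $g=(B-B_+)/(B-B_-)$ is the same computation as the paper's separation of variables and partial fractions), and $A$ obtained by changing variables from $t$ to $B$ and decomposing the rational integrand, with the order-$(n{+}1)$ pole at $1/\theta$ producing the logarithm and the finite sum. The only cosmetic difference is that the paper carries explicit coefficients for $\widetilde{L},\widetilde{N},Q,R$ in the ansatz and shows they are constant, whereas you argue directly that only $\lambda$ needs a nontrivial coefficient.
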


The proof of this theorem and its generalization to piecewise constant parameters is described in the following.

\paragraph{Differential equations}

\begin{proposition}
The model we have described is an affine model in the three variables $\widetilde{L}_t$, $\widetilde{N}_t$ and $\lambda_t$: its characteristic function can be written in the form
\begin{multline}
\mathbb{E}_t \left( e^{u \widetilde{L}_T + v \widetilde{N}_T + w \lambda_T + x Q_T + y R_T} \right) = 
\\
e^{A(t,T;u,v,w,x,y) + B(t,T;u,v,w,x,y) \lambda_t + C(t,T;u,v,w,x,y) \widetilde{L}_t + D(t,T;u,v,w,x,y) \widetilde{N}_t + E(t,T;u,v,w,x,y) Q_t + F(t,T;u,v,w,x,y) R_t}
\label{affine}
\end{multline}
where $A$, $B$, $C$, $D$, $E$  and $F$ are complex functions of the real parameters $t$ and $T$ and the complex parameters $u$, $v$, $w$, $x$ and $y$ which satisfy differential equations \eqref{eqA}--\eqref{eqF} and boundary conditions \eqref{bcA}--\eqref{bcF}.
\end{proposition}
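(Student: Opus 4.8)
The plan is to use the Feynman--Kac / martingale characterization of the transform. Define
\[
f(t,\lambda,\widetilde L,\widetilde N,Q,R) = \mathbb{E}_t\!\left(e^{u\widetilde L_T + v\widetilde N_T + w\lambda_T + xQ_T + yR_T}\right),
\]
regarded as a function of the current state $(\lambda_t,\widetilde L_t,\widetilde N_t,Q_t,R_t)$. By the tower property this conditional expectation is a martingale in $t$, so its drift must vanish. The first step is to write the infinitesimal generator $\mathcal{A}$ of the joint Markov process: a diffusion part $\kappa(\lambda_\infty-\lambda)\partial_\lambda + \tfrac12\sigma^2\lambda\,\partial_\lambda^2$ from the CIR dynamics, plus four jump terms. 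These are the default jumps of intensity $\lambda$, which simultaneously shift $\widetilde L$ by a size $l$ distributed along $\ud\nu_J$, increment $\widetilde N$ by one and, with conditional probability $\xi$, increment $R$; the exogenous intensity jumps of intensity $\gamma$, shifting $\lambda$ by a size $z$ distributed along $\ud\nu_X$; the total-default jumps of intensity $\alpha\lambda+\beta$ incrementing $Q$; and the counterparty jumps of intensity $\zeta\lambda+\eta$ incrementing $R$. The martingale condition is the backward equation $\partial_t f + \mathcal{A}f = 0$ with terminal data $f(T,\cdot)=e^{u\widetilde L_T+\cdots+yR_T}$.

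Next I would insert the exponential-affine ansatz \eqref{affine} and divide through by $f$. The point to verify is that $\mathcal{A}f/f$ is an affine function of the state. The variables $\widetilde L,\widetilde N,Q,R$ enter $f$ only through the exponent and carry no drift of their own, so each jump merely multiplies $f$ by a constant factor: the default term contributes $\lambda\,[\,\phi_J(C)\,e^{D}(1-\xi(1-e^{F}))-1\,]$, the $Q$ term $(\alpha\lambda+\beta)(e^{E}-1)$, and the counterparty term $(\zeta\lambda+\eta)(e^{F}-1)$. The delicate term is the intensity jump, because it shifts $\lambda$ itself; here $f(\lambda+z)=e^{Bz}f(\lambda)$, so integrating over the Gamma law factors $f$ out and yields $\gamma\,[(1-\theta B)^{-(n+1)}-1]$, the Gamma moment generating function, with no residual $\lambda$ dependence. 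This closure is precisely the feature that makes the model affine.

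Collecting terms, the identity $\partial_t f+\mathcal Af=0$ becomes affine in $(\lambda,\widetilde L,\widetilde N,Q,R)$ and must hold for all values of these states, so I match coefficients. The coefficients of $\widetilde L,\widetilde N,Q,R$ give $\dot C=\dot D=\dot E=\dot F=0$, which with the terminal data force $C\equiv u$, $D\equiv v$, $E\equiv x$, $F\equiv y$. The coefficient of $\lambda$ gives the Riccati equation $\dot B = \kappa B - \tfrac12\sigma^2 B^2 + \psi$, in which $\psi$ is exactly the combination stated in the theorem and whose two stationary points $\tfrac12\sigma^2 B^2-\kappa B-\psi=0$ are the $B_\pm$. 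The remaining $\lambda$-independent part yields $\dot A = -\kappa\lambda_\infty B - \gamma[(1-\theta B)^{-(n+1)}-1]+\beta(1-e^{x})+\eta(1-e^{y})$. Together with $A(T)=0$, $B(T)=w$ and the constant values above, these are the asserted equations \eqref{eqA}--\eqref{eqF} and boundary conditions \eqref{bcA}--\eqref{bcF}.

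The main work here is careful bookkeeping in the generator rather than a deep analytic difficulty. The two subtle points are correctly encoding the $\xi$-coupling between a default and a simultaneous counterparty jump inside the single default-intensity term, and recognizing that the affine intensities $\alpha\lambda+\beta$ and $\zeta\lambda+\eta$ split their contributions between the $B$-equation (the $\lambda$-proportional parts $\alpha,\zeta$, which enter $\psi$) and the $A$-equation (the constants $\beta,\eta$). A separate, standard point to confirm is that the candidate process is a genuine martingale and not merely a local one, which follows from the moment behaviour of the CIR-with-jumps intensity together with the finiteness of $\phi_J$ and of the Gamma transform on the relevant domain.
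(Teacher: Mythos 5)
Your proposal is correct and takes essentially the same route as the paper: the paper's expansion of $\ud\,\mathbb{E}_t(\cdot)$ followed by setting its expectation to zero is exactly your martingale condition $\partial_t f + \mathcal{A}f = 0$ for the generator of the joint process, and your coefficient matching reproduces the same system of ODEs, including the $\xi$-coupling $e^D\phi_J(C)\bigl[\xi e^F + (1-\xi)\bigr]$ and the split of the affine intensities $\alpha\lambda+\beta$ and $\zeta\lambda+\eta$ between the $B$- and $A$-equations. Your closing remark on verifying the true-martingale (rather than local-martingale) property is a point the paper leaves implicit.
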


\begin{proof}
Let suppose that equation \eqref{affine} holds. The variation of $\mathbb{E}_t \left( e^{u \widetilde{L}_T + v \widetilde{N}_T +w \lambda_T + x Q_T + y R_T} \right)$ when $t$ goes to $t+\ud t$ can be written as
\begin{multline}
\ud \mathbb{E}_{t} \left( e^{u \widetilde{L}_T + v \widetilde{N}_T +w \lambda_T + x Q_T + y R_T} \right) = \Big[ \partial_t A \ud t
 + \partial_t B  \ud t \, \lambda_t + \left(e^{B \ud \lambda_t}-1\right)
+ \partial_t C  \ud t \,\widetilde{L}_t + \partial_t D  \ud t \, \widetilde{N}_t
\\
+ \partial_t E \ud t \, Q_t + \partial_t F \ud t \, R_t
+ \left(e^{C \ud \widetilde{L}_t + D \ud \widetilde{N}_t + F \ud Q_t}-1\right) + \left(e^{E \ud Q_t}-1\right) \Big]
e^{A + B \lambda_t + C \widetilde{L}_t + D \widetilde{N}_t + E Q_t + F R_t}
\label{affine-2}
\end{multline}
where
\begin{eqnarray*}
e^{B \ud \lambda_t}-1 &=& B \kappa (\lambda_\infty - \lambda_t) \ud t + B  \sigma \sqrt{\lambda_t} \ud W + \frac{1}{2} B^2 \sigma^2 \lambda_t \ud t + \left(e^{B \ud X}-1 \right)
\\
e^{C \ud \widetilde{L}_t+D \ud \widetilde{N}_t + F \ud R_t}-1 &=& e^{(C j + D) \ud J + F \ud R}-1
\rlap{\ .}
\end{eqnarray*}

On the other hand, we know from the trivial equation
\begin{equation*}
\mathbb{E}_t \left[ \mathbb{E}_{t+\ud t} \left( e^{u \widetilde{L}_T + v \widetilde{N}_T +w \lambda_T + x Q_T + y R_T} \right) \right] = \mathbb{E}_t \left( e^{u \widetilde{L}_T + v \widetilde{N}_T +w \lambda_T + x Q_T + y R_T} \right)
\end{equation*}
that
\begin{equation*}
\mathbb{E}_t \left[ \partial_t \mathbb{E}_{t} \left( e^{u \widetilde{L}_T + v \widetilde{N}_T +w \lambda_T + x Q_T + y R_T} \right) \right] = 0
\rlap{\ .}
\end{equation*}
Therefore the expected value of \eqref{affine-2} must vanish:
\begin{multline*}
0 = \Bigg[ \partial_t A \ud t
 + \partial_t B \ud t \,\lambda_t + B \kappa (\lambda_\infty - \lambda_t) \ud t + \frac{1}{2} B^2 \sigma^2 \lambda_t \ud t + \gamma \ud t \int \left(e^{B x}-1 \right) \ud \nu_X(x)
 \\
+ \partial_t C \ud t \,\widetilde{L}_t + \partial_t D \ud t \, \widetilde{N}_t 
 + \partial_t E \ud t \, Q_t + \partial_t F \ud t \, R_t
+ \lambda_t \ud t \left( e^D \int e^{C j} \ud \nu_J(j) \big[ \xi e^F + (1-\xi) \big] -1 \right)
\\
 + (\alpha \lambda_t + \beta) \ud t \left (e^E-1\right)+ (\zeta \lambda_t + \eta) \ud t \left(e^F-1\right) \Bigg]
e^{A + B \lambda_t + C \widetilde{L}_t + D \widetilde{N}_t + E Q_t + F R_t}
\rlap{\ .}
\end{multline*}
Forgetting the last exponential, this equation is affine in $\widetilde{L}_t$, $\widetilde{N}_t$, $\lambda_t$, $Q_t$ and $R_t$.
As this must be true for any value of stochastic variables $\widetilde{L}_t$, $\widetilde{N}_t$, $\lambda_t$, $Q_t$ and $R_t$, this equation holds if and only if the following equations are satisfied
\begin{eqnarray}
\partial_t A(t,T;u,v,w,x,y) &=& - \kappa \lambda_\infty B - \gamma \left( \phi_X(B) -1 \right) - \beta\left (e^E-1\right)  - \eta\left (e^F-1\right)
\label{eqA}
\\
\partial_t B(t,T;u,v,w,x,y) &=& \kappa B - \frac{1}{2} \sigma^2 B^2 - \left( e^D \phi_J(C) \big[ \xi e^F + (1-\xi) \big] -1 \right)
\nonumber\\
&& - \alpha \left(e^E-1\right) - \zeta \left(e^F-1\right)
\label{eqB}
\\
\partial_t C(t,T;u,v,w,x,y) &=& 0
\label{eqC}
\\
\partial_t D(t,T;u,v,w,x,y) &=& 0
\label{eqD}
\\
\partial_t E(t,T;u,v,w,x,y) &=& 0
\label{eqE}
\\
\partial_t F(t,T;u,v,w,x,y) &=& 0
\rlap{\ .}
\label{eqF}
\end{eqnarray}
with the boundary conditions at $t=T$
\begin{eqnarray}
A(T,T,u,v,w,y) &=& 0
\label{bcA}
\\
B(T,T,u,v,w,y) &=& w
\label{bcB}
\\
C(T,T,u,v,w,y) &=& u
\label{bcC}
\\
D(T,T,u,v,w,y) &=& v
\label{bcD}
\\
E(T,T,u,v,w,y) &=& x
\label{bcE}
\\
F(T,T,u,v,w,y) &=& y
\rlap{\ .}
\label{bcF}
\end{eqnarray}
We have used the following notations for the characteristic functions of $\ud \nu_J$ and $\ud \nu_X$
\begin{eqnarray*}
\phi_J(z) &=& \int e^{z l}\ud \nu_J(l) \\
\phi_X(z) &=& \int e^{z x}\ud \nu_X(x)
\rlap{\ .}
\end{eqnarray*}

As a consequence, equation \eqref{affine} holds if and only if these differential equations are satisfied with their boundary conditions.
\end{proof}

\paragraph{Analytical solution}

These equations can be solved in closed form, when parameters are piecewise constant.
We consider first the case where all parameters are constant, which completes the proof of theorem \ref{thm}.

The equations for $C$, $D$, $E$ and $F$ have the simple solutions
\begin{eqnarray*}
C(t,T;u,v,w,x,y) &=& u
\\
D(t,T;u,v,w,x,y) &=& v
\\
E(t,T;u,v,w,x,y) &=& x
\\
F(t,T;u,v,w,x,y) &=& y
\rlap{\ .}
\end{eqnarray*}

The equation for $B$ is a Riccati equation with constant coefficients:
\begin{equation}
\partial_t B(t,T;u,v,w,x,y) = - \frac{1}{2} \sigma^2 B^2 + \kappa B + \psi
\label{riccati}
\end{equation}
with
\begin{equation*}
\psi(u,v,x,y) =  1-e^v \phi_J(u) \big[ 1 - \xi (1-e^y) \big] + \alpha (1-e^x) + \zeta (1-e^y)
\rlap{\ .}
\end{equation*}

The standard way for solving this equation is:
\begin{enumerate}
\item find a particular (constant) solution $b$;
\item write $B = b + f$, which gives a Bernoulli equation;
\item change the variable to $g = \frac{1}{f}$ to get a linear equation.
\end{enumerate}

Instead of this, we give a more pedestrian but more direct derivation of the solution, which is more in the spirit of the solution of the equation for $A$. We rewrite equation \eqref{riccati} as
\begin{equation*}
\frac{\ud B(t,T;u,v,w,x,y)}{- \frac{1}{2} \sigma^2 B^2 + \kappa B + \psi} = \ud t
\end{equation*}
which can be integrated directly.

We find the simple poles decomposition after rewriting again this equation as
\begin{equation*}
\frac{\ud B}{\left(B - B_+\right) \left(B - B_-\right) } = -\frac{\sigma^2}{2} \ud t
\end{equation*}
where $B_\pm$ are in fact the constant solutions of the differential equation \eqref{riccati}:
\begin{equation*}
B_\pm = \frac{\kappa \pm \sqrt{\kappa^2 + 2 \sigma^2 \psi}}{\sigma^2}
\rlap{\ .}
\end{equation*}
With non zero $\kappa$ and $u$ on the imaginary axis $B_+$ and $B_-$ are distinct and the simple pole decomposition
\begin{eqnarray*}
\frac{\ud B}{B - B_+}-\frac{\ud B}{B - B_-} &=& -\frac{2(B_+ - B_-)}{\sigma^2} \ud t
\\
&=& - \sqrt{\kappa^2 + 2 \sigma^2 \psi} \ud t
\end{eqnarray*}
is integrated as
\begin{equation*}
\ln\!\left(\frac{B - B_+}{B - B_-}\right) = \ln\!\left(\frac{B_T - B_+}{B_T - B_-}\right) + \sqrt{\kappa^2 + 2 \sigma^2 \psi} (T-t)
\end{equation*}
where we simplify slightly the notations by writing $B=B(t,T;u,v,w,x,y)$ and $B_T=B(T,T,u,v,w,y)$.
The exponential of this equation is
\begin{equation*}
\frac{B - B_-}{B - B_+} = \frac{B_T - B_-}{B_T - B_+} e^{-\sqrt{\kappa^2 + 2 \sigma^2 \psi} (T-t)} = \chi(t,T;u,v,w,x,y)
\end{equation*}
where we have introduced the notation $\chi(t,T;u,v,w,x,y)$. It gives the following expression for $B$:
\begin{equation}
B = \frac{B_- - \chi B_+}{1-\chi} = B_- + \frac{B_+ - B_-}{1-\frac{1}{\chi}}
\label{solB}
\rlap{\ .}
\end{equation}
When $T-t$ goes to infinity $\chi$ goes to zero and $B$ goes to $B_-$.

We are left with the equation for $A$. For the moment, we set $\beta=0$ and $\eta=0$ (the last terms of equation~\eqref{eqA} will give simple contributions that we give at the end of this section). This equation will be solved as a function of $B$. Dividing equation \eqref{eqA} (without the last term) by equation \eqref{eqB} gives
\begin{equation*}
\frac{\ud A}{\ud B} = \frac{\kappa \lambda_\infty B + \gamma \left( \phi_X(B) -1 \right)}{ \frac{1}{2} \sigma^2 B^2 - \kappa B + \left( e^D \phi_J(C) -1 \right)}
\rlap{\ .}
\end{equation*}
The assumption we have made on the jump size distribution ensures that the right-hand side is a rational function. Equation \eqref{dNuX} gives the characteristic function of this distribution as
\begin{equation*}
\phi_X(x) = (1-\theta z)^{-(n+1)}
\rlap{\ .}
\end{equation*}
Therefore all poles of the rational functions are known: $B_+$ and $B_-$ are simple poles and there is a pole of order $n+1$ in
\begin{equation*}
B_\theta = \frac{1}{\theta}:
\end{equation*}
\begin{equation*}
\frac{\ud A}{\ud B} = \frac{\kappa \lambda_\infty B + \gamma \left( (1-\theta B)^{-(n+1)} -1 \right)}{ \frac{1}{2} \sigma^2 (B-B_+)(B-B_-)}
\rlap{\ .}
\end{equation*}

To write down the pole decomposition, the first step is to use
\begin{equation*}
\frac{1}{\frac{1}{2} \sigma^2 (B-B_+)(B-B_-)} = \frac{1}{\frac{1}{2} \sigma^2 (B_+-B_-)} \left( \frac{1}{B-B_+} - \frac{1}{B-B_-} \right)
\end{equation*} to separate the fraction in two parts:
\begin{equation}
\frac{\ud A}{\ud B} = \frac{1}{\frac{1}{2} \sigma^2 (B_+-B_-)} \left( \frac{F(B)}{B-B_+} - \frac{F(B)}{B-B_-} \right)
\label{twoparts}
\end{equation}
where
\begin{equation*}
F(B) = \kappa \lambda_\infty B - \gamma + \frac{\gamma ( - B_\theta)^{n+1}}{(B-B_\theta)^{n+1}}
\rlap{\ .}
\end{equation*}
The first two terms give a contribution to $\pm \frac{F(B)}{B-B_\pm}$
\begin{equation*}
\pm \kappa \lambda_\infty \pm \frac{\kappa \lambda_\infty B_\pm - \gamma}{B-B_\pm}
\end{equation*}
where the constant term will be simplified between $+$ and $-$ components.
The other poles of $\pm \frac{F(B)}{B-B_\pm}$ come from the term $\pm \frac{\gamma ( - B_\theta)^{n+1}}{(B-B_\theta)^{n+1}}$:
\begin{equation*}
\pm \frac{\gamma ( - B_\theta)^{n+1}}{(B-B_\theta)^{n+1}}
=
\pm \gamma ( - B_\theta)^{n+1} \left[ \frac{k_\pm}{B-B_\pm} + \sum_{i \leq n+1} \frac{k_i}{(B-B_\theta)^i} \right]
\rlap{\ .}
\end{equation*}
Multiplying by $B-B_\pm$ and setting $B=B_\pm$ give
\begin{equation*}
k_\pm = \frac{1}{(B_\pm-B_\theta)^{n+1}}
\rlap{\ .}
\end{equation*}
The decomposition is completed by computing
\begin{equation*}
\frac{1}{B-B_\pm}\left(\frac{1}{(B-B_\theta)^{n+1}} - \frac{1}{(B_\pm-B_\theta)^{n+1}}\right) = - \sum_{k=0}^n \frac{1}{(B_\pm-B_\theta)^{n-k+1} (B-B_\theta)^{k+1}}
\rlap{\ .}
\end{equation*}
The pole decomposition finally reads
\begin{eqnarray*}
\pm \frac{F(B)}{B-B_\pm} &=& \pm \kappa \lambda_\infty
\\
&& \pm \left[ \kappa \lambda_\infty B_\pm + \gamma \left( \frac{( - B_\theta)^{n+1}}{(B_\pm-B_\theta)^{n+1}} -1 \right) \right]\frac{1}{B-B_\pm}
\\
&& \mp \gamma  \sum_{k=0}^n \frac{( - B_\theta)^{n+1}}{(B_\pm-B_\theta)^{n-k+1} (B-B_\theta)^{k+1}}
\end{eqnarray*}
which has to be plugged into equation \eqref{twoparts}.

We can now integrate \eqref{twoparts} analytically:
\begin{eqnarray}
A= A_T &+& \frac{1}{\frac{1}{2} \sigma^2 (B_+-B_-)} \times
\nonumber\\
\sum_\pm \Bigg(
&\pm& \left[ \kappa \lambda_\infty B_\pm + \gamma \left( \frac{1}{(1-\theta B_\pm)^{n+1}} -1 \right) \right] \ln\!\left( \frac{B-B_\pm}{B_T-B_\pm} \right)
\nonumber\\
&\mp& \gamma \frac{1}{(1-\theta B_\pm)^{n+1}} \ln\!\left( \frac{1-\theta B}{1-\theta B_T} \right)
\nonumber\\
&\pm& \gamma \sum_{k=1}^n \frac{1}{k} \frac{1}{(1-\theta B_\pm)^{n-k+1}}
\left[\frac{1}{(1-\theta B)^k} - \frac{1}{(1-\theta B_T)^k}\right]\Bigg)
\label{solA}
\rlap{\ .}
\end{eqnarray}
For $\beta \neq 0$ or $\eta \neq 0$, there are additional contributions to $A$
\begin{equation}
-\beta (1-e^x) (T-t) -\eta (1-e^y) (T-t)
\rlap{\ .}
\label{solAbis}
\end{equation}

Equations \eqref{solB}, \eqref{solA} and \eqref{solAbis} together with terminal conditions $B_T = w$ and $A_T = 0$ provide an analytical solution for the four-dimensional characteristic function of the joint distribution of $\widetilde{L}_T$, $\widetilde{N}_T$, $\lambda_T$, $Q_T$ and $R_T$.

The formulas computed correspond to the case where all parameters are constant between the computation time and the maturity $T$ for which the distributions must be computed. If parameters are piecewise constant, the integration has to be performed backwards, with intermediary boundary conditions $B_{T_i}$ and $A_{T_i}$.

\subsubsection{Loss distribution and moments}

An inverse Fourier transform gives the Loss distribution from the characteristic function
\begin{equation}
\mathbb{E}_t \left( e^{i p \widetilde{L}_T} \right) = e^{A(t,T;i p,0,0,0,0) + B(t,T;i p,0,0,0,0) \lambda_t + i p \widetilde{L}_t}
\label{caracL}
\end{equation}
where $u= i p $ and $v=w=y=0$.

The number of defaults can be obtained similarly by Fourier-inverting
\begin{equation*}
\mathbb{E}_t \left( e^{i p \widetilde{N}_T} \right) = e^{A(t,T;0,i p,0,0,0) + B(t,T;0,i p,0,0,0) \lambda_t + i p \widetilde{N}_t}
\end{equation*}
where $v=i p$ and $u=w=y=0$.
Computations involving $\widetilde{N}_t$ are similar to ones with $\widetilde{L}_t$ so in what follows we focus on $\widetilde{L}_t$.

We have to take into account the fact that the whole basket may default, which is modelled by the nonnegative stochastic variable $Q$. $Q>0$ means that all names have defaulted. We write
\begin{equation*}
\mathbbm{1}_{Q_T=0} = e^{-\infty Q_T}
\rlap{\ .}
\end{equation*}
This allows to compute quantities like
\begin{eqnarray*}
\mathbb{E}_t \left( e^{i p \widetilde{L}_T} \mathbbm{1}_{Q_T=0} \right) &=& \mathbb{E}_t \left( e^{i p \widetilde{L}_T -\infty Q_T} \right)
\\
&=& e^{A(t,T;i p,0,0,-\infty,0) + B(t,T;i p,0,0,-\infty,0) \lambda_t + i p \widetilde{L}_t}
 \mathbbm{1}_{Q_t=0}
\end{eqnarray*}
and  similarly
\begin{equation*}
\mathbb{E}_t \left( e^{i p \widetilde{N}_T} \mathbbm{1}_{Q_T=0} \right) = e^{A(t,T;0,i p,0,-\infty,0) + B(t,T;0,i p,0,-\infty,0) \lambda_t + i p \widetilde{N}_t}
 \mathbbm{1}_{Q_t=0}
\rlap{\ .}
\end{equation*}
We need in addition
\begin{equation*}
\mathbb{E}_t \left( \mathbbm{1}_{Q_T>0} \right) = 1-\mathbb{E}_t \left( \mathbbm{1}_{Q_T=0} \right) = 1-e^{A(t,T;0,0,0,-\infty,0) + B(t,T;0,0,0,-\infty,0) \lambda_t}
 \mathbbm{1}_{Q_t=0}
\rlap{\ .}
\end{equation*}
In fact, $x=-\infty$ means that $e^x$ is set to 0 in formulas of theorem~\ref{thm}.

The counterparty risk is handled similarly through variable $Rt$ which is null until the counterparty defaults. In the following, we do not take it into account in formulas in order not to overload formulas. It is straightforward to extend them to this case.

The loss mean $\mathbb{E}(\widetilde{L}_T)$ or the loss variance $\mathbb{E}\!\left(\left(\widetilde{L}_T\right)^2 - {\mathbb{E}\!\left(\widetilde{L}_T\right)}^2\right)$ can be computed in two ways. Either the Taylor expansion of equation~\eqref{caracL} in $p \sim 0$ is computed, which produces all moments up to a given order, or differential equations are directly written and solved for the moments.

\subsubsection{Time dependency}
\label{timeDependency}

To be able to match market prices of index CDS at all maturities, some parameters must be time-dependent. There are several possible choice. The long-run intensity $\lambda_\infty$ can be taken piecewise constant for example. An other possibility is a change of coordinate on the time variable. $t$ is taken to be an increasing function $t(\tau)$ of real time $\tau$. If time goes faster, defaults are more likely to happen in a given period of time, if times goes slower defaults have lower probability. For example if one wants to fit exactly the index CDS quotes at several maturity, the simplest solution is to take a piecewise affine function for $f$, with slope change at CDS maturities. When the slope changes by a factor $a$, this is equivalent to a scaling of parameters
\begin{equation*}
\begin{array}{cclc}
\lambda & \longrightarrow & a &\lambda \\
\gamma & \longrightarrow & a &\gamma \\
\theta & \longrightarrow & a &\theta \\
n & \longrightarrow & & n \\
\kappa & \longrightarrow & a &\kappa \\
\sigma & \longrightarrow & a &\sigma \\
\alpha & \longrightarrow & &\alpha \\
\beta & \longrightarrow & a &\beta \\
\xi & \longrightarrow & &\xi \\
\zeta & \longrightarrow & &\zeta \\
\eta & \longrightarrow & a &\eta
\rlap{\ .}
\end{array}
\end{equation*}

\subsection{Finite basket}

The model described really corresponds to an infinite pool of issuers: there can be any number of defaults, even larger than the basket considered. There is no exhaustion of the issuers pool. Let us consider now a basket of $N_M$ issuers. We denote by $N_t$ the number of default in the finite basket and $L_t$ the corresponding loss.

\subsubsection{Truncation}

A first solution would be to truncate the number of defaults to the basket size $N_M$
and the loss to $L_M = l_1 N_M$:
\begin{eqnarray*}
N_t &=& \min(\widetilde{N}_t,N_M)
\\
L_t &=& \min(\widetilde{L}_t,L_M)
\rlap{\ .}
\end{eqnarray*}
The expected loss $\mathbb{E}(L_t)$ may be approximated to $\mathbb{E}(\widetilde{L}_t)$ if the contributions from $\widetilde{L}_t > L_M$ are negligible. If this approximation is not valid, a lengthier computation must be performed: a CDO tranche with detachment point $L_M$ has to be priced to get the value of a index CDS. This would slow down our calibration process. Moreover, the probability that a given issuer defaults in a basket where many other issuers have already defaulted would be much greater than its default probability in a basket without defaults yet, for the same default intensity: the number of default per unit of time does not depend on the number of issuers alive in the basket. In some sense this is not a real problem: it introduces some default contagion. On the other hand, this default correlation is not controlled and does not come from the default intensity process.

\subsubsection{Subpool and uniform probability}

We therefore prefer an other solution. We consider that the finite pool is a part of the infinite pool. The infinite pool can be considered to model the whole economy. With uniform probability on pool components, the probability that a default of the infinite pool concerns the finite pool is proportional to its number of elements. Starting from a finite pool size of $N_M$, after $N$ defaults on the finite basket a given default of the infinite pool has a probability proportional to $N_M-N$ to occur in the basket. When the number of defaults in the basket is $N=N_M$ there is no more default: the number of defaults is correctly bounded. We normalize this probability as $\frac{N_M-N}{N_M}$. There is no loss of generality with this choice of normalization: the model would be the same if this probability is multiplied by some factor and the intensity process on the infinite pool is divided by the same factor.

We forget for the moment the probability of a simultaneous default of the whole basket. We can compute the probability of $k$ defaults in the credit basket as a sum over probabilities of $j$ defaults in the infinite pool weighted by the probability that $j$ defaults in the infinite pool gives $k$ defaults in the finite basket. Denoting by $p_{jk}$ this probability we have
\begin{equation}
\mathbb{P}(N_T = k \, \vert \,  Q_T = 0) = \sum_{j=0}^{\infty}  \mathbb{P}(\widetilde{N}_T = j \, \vert \,  Q_T = 0) p_{jk} \rlap{\ .}
\label{eq-change-pool}
\end{equation}
The $p_{kj}$ coefficients satisfy the recursion
\begin{equation}
p_{j+1,k} = \frac{k}{N_M} p_{j,k} + \frac{N_M-k+1}{N_M} p_{j,k-1}
\label{eq-pool-recur}
\end{equation}
with initial condition
\begin{equation}
p_{0,k} = \delta_{k0}  \rlap{\ .}
\label{eq-pool-ini}
\end{equation}

It is possible to find a closed form solution to this set of equations. The observation of the solution for $k=0,1,2$ leads to the hypothesis that $p_{jk}$, $k<j$, can be expanded in eigenmodes
\begin{equation*}
p_{jk} = \sum_{l=0}^{N_M} a_{kl} \left( \frac{l}{N_M} \right)^j \rlap{\ .}
\end{equation*}
Equation \eqref{eq-pool-recur} is satisfied if and only if
\begin{equation*}
(l-k) a_{kl} = (N-M-k+1) a_{k-1,l}
\end{equation*}
which leads to
\begin{equation*}
\begin{array}{rcll}
a_{kl} &=& \displaystyle  \binom{N_M-l}{k-l} a_{ll} (-1)^{k-l} & k \geq l
\\
a_{kl} &=& 0 & k<l \rlap{\ .}
\end{array}
\end{equation*}
The coefficients $a_{ll}$ are given by the initial condition \eqref{eq-pool-ini}. The expression for $p_{0k}$ is triangular and can be computed recursively.
To shorten the computation, we introduce a polynomial in some variable $Z$:
\begin{equation*}
P(Z) = \sum_{k=0}^{N_M} p_{0k} Z^k
\rlap{\ .}
\end{equation*}
The initial condition $p_{0k} = \delta_{k0}$ is equivalent to $P(Z) = 1$.
Using our solution for $p_{0k}$ in terms of $a_{kl}$ this polynomial is
\begin{equation*}
P(Z) = \sum_{k=0}^{N_M} \sum_{l=0}^{k} \binom{N_M-l}{k-l} a_{ll} (-1)^{k-l}   Z^k \rlap{\ .}
\end{equation*}
Switching sums in $k$ and $l$ we can sum over $k$ and get
\begin{equation*}
P(Z) = \sum_{l=0}^{N_M} a_{ll} (1-Z)^{N_M-l} Z^l = 1\rlap{\ .}
\end{equation*}
The unique solution to this equation is
\begin{equation*}
a_{ll} = \binom{N_M}{l}
\rlap{\ .}
\end{equation*}
After some rearrangement the solution finally is
\begin{equation}
p_{jk} = \binom{N_M}{k} (-1)^k \sum_{l=0}^{k} \binom{k}{l} (-1)^l \left(\frac{l}{N_M}\right)^j \rlap{\ .}
\label{eq-pool-sol}
\end{equation}

Using this solution and taking into account the disaster event modelled by $Q$, the probability of default in the finite basket \eqref{eq-change-pool} is
\begin{equation}
\mathbb{P}(N_T = k) = \binom{N_M}{k} (-1)^k \sum_{l=0}^{k} \binom{k}{l} (-1)^l \, \mathbb{E}\!\left[ \left(\frac{l}{N_M}\right)^{\widetilde{N}_T} \mathbbm{1}_{Q_T=0} \right] + \delta_{k,N_M}   \mathbb{E}\!\left(\mathbbm{1}_{Q_T>0} \right)
\end{equation}
where the expected values on the right-hand side are given by theorem~\ref{thm}:
\begin{eqnarray}
\mathbb{E}\!\left[ \left(\frac{l}{N_M}\right)^{\widetilde{N}_T}  \mathbbm{1}_{Q_T=0} \right] &=& e^{A\!\left(0,T,0,\ln\!\left( \frac{l}{N_M}\right),0,-\infty,0\right)+B\!\left(0,T,0,\ln\!\left( \frac{l}{N_M}\right),0,-\infty,0\right) \lambda_0}
\\
\mathbb{E}\!\left( \mathbbm{1}_{Q_T>0} \right) &=& 1-e^{A\!\left(0,T,0,0,0,-\infty,0\right)+B\!\left(0,T,0,0,0,-\infty,0\right) \lambda_0}
 \rlap{\ .}
\end{eqnarray}
More generally, the conditional probability at time $t$ is given by a slightly modified formula:
\begin{multline}
\mathbb{P}_t(N_T = k) = \mathbbm{1}_{k \geq N_t} \binom{N_M-N_t}{k-N_t} (-1)^k \sum_{l=N_t}^{k} \binom{k-N_t}{l-N_t} (-1)^{l} \, \mathbb{E}_t\!\left[ \left(\frac{l}{N_M}\right)^{\widetilde{N}_T-\widetilde{N}_t}  \mathbbm{1}_{Q_T=0} \right] 
\\
+ \delta_{k,N_M}  \mathbb{E}_t\!\left(\mathbbm{1}_{Q_T>0} \right)
\end{multline}
where the expected value on the right-hand side are given by theorem~\ref{thm}:
\begin{eqnarray}
\mathbb{E}_t\!\left[ \left(\frac{l}{N_M}\right)^{\widetilde{N}_T-\widetilde{N}_t}  \mathbbm{1}_{Q_T=0}\right] &=& e^{A\!\left(t,T,0,\ln\!\left( \frac{l}{N_M}\right),0,-\infty,0\right)+B\!\left(t,T,0,\ln\!\left( \frac{l}{N_M}\right),0,-\infty,0\right) \lambda_t} \mathbbm{1}_{Q_t=0}
\\
\mathbb{E}_t\!\left( \mathbbm{1}_{Q_T>0} \right) &=& 1-e^{A\!\left(t,T,0,0,0,-\infty,0\right)+B\!\left(t,T,0,0,0,-\infty,0\right) \lambda_0} \mathbbm{1}_{Q_t=0}
 \rlap{\ .}
\end{eqnarray}

We have finally obtained a closed formula for the default distribution as a finite sum. For small enough $N_M$ or $k$ it can be used. However for large values of $N_M$ and $k$ this formula is completely useless for numerical applications: it is a alternating sum of numbers with very large coefficients. For example for $N_M=125$ and $k=62$, which corresponds to the highest contribution to the standard senior tranche on CDX with a recovery rate of 40\%, the highest coefficient has an order of magnitude of $10^{54}$ and the final result is between 0 and 1: the computations should be done with very high accuracy and cannot be done in a direct way on current processors.
In the remaining parts of this section we therefore give other computation techniques more suitable to numerical implementation.

\subsubsection{Expected loss}
\label{sec-loss}

In order to compute CDS prices, the expected loss
\begin{equation}
\mathbb{E}(L_T ) = l_1 \mathbb{E}(N_T )
\end{equation}
can be computed from the mean jump size $l_1$, equal to 1 minus the average recovery rate, and the expected number of defaults. To compute the latter quantity, we introduce the conditional mean value
\begin{equation*}
e_j = \mathbb{E}(N_T | \widetilde{N}_T=j , Q_T = 0) =  \sum_{k=0}^{N_M} p_{jk} k
\end{equation*}
which does not depend on $T$ as the probability of jumps in the basket does not depend explicitly on time. In a first step we condition all probabilities by $Q_T=0$, \emph{i.e.} we do not consider a default of the whole infinite pool.
The recursion equation \eqref{eq-pool-recur} translates for the mean value to
\begin{equation*}
e_{j+1} - e_j = \sum_{k=0}^{N_M} p_{jk} \frac{N_M-k}{N_M} =  \frac{N_M-e_j}{N_M}
\rlap{\ .}
\end{equation*}
From the initial value $e_0 = 0$ the solution is
\begin{equation*}
e_j = N_M \left[ 1- \left( 1-\frac{1}{N_M}\right)^j \right] \rlap{\ .}
\end{equation*}
The unconditional expected number of defaults is obtained by averaging over the number of defaults $j$ in the infinite pool and taking into account the default of the whole infinite pool:
\begin{equation*}
\mathbb{E}(N_T) = N_M \left( 1- \mathbb{E}\!\left[ \left( 1-\frac{1}{N_M}\right)^{\widetilde{N}_T} \mathbbm{1}_{Q_T=0} \right] \right) \rlap{\ .}
\end{equation*}
The expected value on the right-hand side corresponds to the characteristic function we have computed in theorem~\ref{thm}, with $v=\ln\!\left(1-\frac{1}{N_M}\right)$, $x=-\infty$, $u=w=y=0$:
\begin{equation}
\mathbb{E}(N_T) = N_M \left[ 1-e^{A\!\left(0,T;0,\ln\!\left(1-\frac{1}{N_M}\right),0,-\infty,0\right) + B\!\left(0,T;0,\ln\!\left(1-\frac{1}{N_M}\right),0,-\infty,0\right) \lambda_0} \right] \rlap{\ .}
\end{equation}

We find similarly the conditional average number of default:
\begin{equation*}
\mathbb{E}_t(N_T) = N_M - (N_M-N_t) \mathbb{E}_t\!\left[ \left( 1-\frac{1}{N_M}\right)^{\widetilde{N}_T-\widetilde{N}_t} \mathbbm{1}_{Q_T=0} \right] 
\rlap{\ .}
\end{equation*}
Theorem~\ref{thm} gives the explicit expression
\begin{equation}
\mathbb{E}_t(N_T) =  N_M - (N_M-N_t) e^{A\!\left(t,T;0,\ln\!\left(1-\frac{1}{N_M}\right),0,-\infty,0\right) + B\!\left(t,T;0,\ln\!\left(1-\frac{1}{N_M}\right),0,-\infty,0\right) \lambda_t}  \mathbbm{1}_{Q_t=0} \rlap{\ .}
\label{eqEL}
\end{equation}

\subsubsection{Expected tranche loss}

For CDOs we have to compute the expected value of the tranche loss. We consider first the case of fixed recovery rate, with default size $l_1$ for all components.
From a tranche with attachment point $a$ and detachment point $d$, the tranche loss is
\begin{equation*}
L_T^{a,d} = (d-a) + (a-L_T)^+ - (d-L_T)^+
\rlap{\ .}
\end{equation*}

The expected value of
\begin{equation*}
\mathbb{E}\!\left[ (K - L_T)^+ \right] = \sum_{k=0}^{N_M} \mathbb{P}(N_T=k) (K-l_1 k)^+
\end{equation*}
can be rewritten using \eqref{eq-change-pool} as
\begin{equation*}
\mathbb{E}\!\left[ (K - L_T)^+ \right] = \sum_{j=0}^{\infty} \left[ \sum_{k=0}^{N_M} p_{jk} (K-l_1 k)^+ \right] \mathbb{P}\!\left(\widetilde{N}_T=j , Q_T=0\right) 
\rlap{\ .}
\end{equation*}
(We suppose that the tranche does not contain the end of the basket: $K<l_1 N_M$.)
This quantity is the integral of the function
\begin{equation*}
f_K(j) =  \sum_{k=0}^{N_M} p_{jk} (K-l_1 k)^+
\end{equation*}
against the probability distribution of $\widetilde{N}_T$.

Theorem~\ref{thm} gives the Fourier transform of this distribution. We can invert it at each time $T$ for which we need to compute the tranche loss and integrate against $f_K$. When calibrating, these Fourier transforms has also to be done for all sets of model parameters. Alternatively, we can compute the Fourier transform of $f_K$, which has to be done only once for each value of $K$ and integrate it against the characteristic function of $\widetilde{N}_T$. This considerably reduces the number of numerical Fourier transform to perform.

For each value of $K$ needed, the Fourier transform of $f_K$
\begin{equation*}
\hat{f}_K(p) = \sum_{j=0}^{\infty} e^{ipj} f_K(j)  = \sum_{j=0}^{\infty}  e^{ipj} \sum_{k=0}^{N_M} p_{jk} (K-l_1 k)^+
\end{equation*}
is computed numerically, using FFT for example. A bound on $j$ has to be estimated. Matrix elements $p_{jk}$ have an explicit form given in equation~\eqref{eq-pool-sol} but we have seen this expression is often useless numerically. We therefore compute these numbers directly using the recursion~\eqref{eq-pool-recur} from initial condition \eqref{eq-pool-ini}. As these numbers does not depend on the model parameters or even the attachment or detachment point but only on the basket size $N_M$, this computation has to be done only once for all instruments on a basket of size $N_M$. The Fourier transform itself has to be done only once for each detachment point $K$, basket size $N_M$ and recovery rate $1-l_1$.

We finally obtain $\mathbb{E}\!\left[ (K - L_T)^+ \right]$ as
\begin{equation*}
\mathbb{E}\!\left[ (K - L_T)^+ \right] = \frac{1}{2\pi} \int_{-\infty}^{+\infty} \hat{f}_K(-p) \mathbb{E}\!\left(e^{ip \widetilde{N}_T} \mathbbm{1}_{Q_T=0}\right)
\end{equation*}
or more explicitly using theorem~\ref{thm}
\begin{equation}
\mathbb{E}\!\left[ (K - L_T)^+ \right] = \frac{1}{2\pi} \int_{-\infty}^{+\infty} \hat{f}_K(-p)
e^{ A( 0,T;0,i p,0,-\infty,0) + B( 0,T;0,i p,0,-\infty,0)  \lambda_0}
\rlap{\ .}
\label{eqPut}
\end{equation}

For N\textsuperscript{th}-to-default, we need to compute at every time step the quantity $\mathbb{E}\!\left( \mathbbm{1}_{N_T < k} \right) = \mathbb{P}(N_T<k)$. It is computed similarly for $k<N_M$ as
\begin{eqnarray}
\mathbb{E}\!\left( \mathbbm{1}_{N_T < k} \right) &=& \frac{1}{2\pi} \int_{-\infty}^{+\infty} \hat{g}_k(-p)  \mathbb{E}\!\left(e^{ip \widetilde{N}_T}  \mathbbm{1}_{Q_T=0}\right)
\nonumber
\\
& =&  \frac{1}{2\pi} \int_{-\infty}^{+\infty} \hat{g}_k(-p)
e^{ A( 0,T;0,i p,0,-\infty,0) + B( 0,T;0,i p,0,-\infty,0)  \lambda_0}
\label{eqPutDigital}
\end{eqnarray}
with
\begin{equation*}
\hat{g}_k(p) = \sum_{j=0}^{\infty} e^{ipj} \mathbbm{1}_{N_T < k}  = \sum_{j=0}^{\infty}  e^{ipj} \sum_{k'=0}^{k-1} p_{jk'}
\rlap{\ .}
\end{equation*}

\subsubsection{Stochastic recovery}
\label{sec-stoch-reco}

Stochastic recovery rate can be included in the framework. The loss process is no longer $L_T = l_1 N_T$. If recovery rates are independent of each other, $L_T$ is given by the convolution of the jump distributions up to the ${N_T}^{\textrm{th}}$ jump.

When computing tranche losses, the function $f_K(j)$ has to be modified. More precisely, $(K-l_1 k)^+$ has to be replaced by the expected value of $(K-L_T)^+$ conditional to $N_T=k$: it is the integral of this function weighted by the distribution of $L_T$ conditional to $N_T=k$, \emph{i.e.} the convolution of distributions of the $k$ first jumps.

If all recovery rates have the same mean $1-l_1$, the expected loss is still given by $E(L_T) = l_1 E(N_T)$. If the recovery rate mean depends on the number of past jumps, it is no longer true. In many cases it is still possible to compute a simple formula for the expected loss. In other cases, one can always use the algorithm for tranche losses for a tranche covering the whole basket.

\section{Pricing}
\label{sec-pricing}

\subsection{Index CDS}
\label{sec-cds}

The price of a CDS is given by the difference between a protection leg which pays the amount of defaults on the basket\footnote{We write payoffs with integrals to simplify notations; they are in fact discrete sums.}
\begin{equation*}
CL_{CDS} = \int_0^T d L_{\tau} ZC(0,\tau)
\end{equation*}
and a premium leg which pays a fixed premium $s$
\begin{equation*}
FL_{CDS} = s \int_0^T \ud \tau (N_M - N_{\tau}) ZC(0,\tau)
\rlap{\ .}
\end{equation*}
$T$ is the maturity of the contract, time 0 is the pricing date and $ZC(0,T)$ is the zero coupon discount factor. The price at time 0 is the risk-neutral expected value
\begin{equation}
\mathbb{E}( CL_{CDS}-FL_{CDS}) = \int_0^T d \mathbb{E}(L_{\tau}) ZC(0,\tau)
- s \int_0^T \ud \tau (N_M - \mathbb{E}(N_{\tau})) ZC(0,\tau)
\label{CDSprice}
\rlap{\ .}
\end{equation}
The expected values which appear in the right-hand side can be computed using equation \eqref{eqEL}. With a recovery rate of fixed mean size $1-l_1$, the expected value of $L_\tau$ is $\mathbb{E}(L_\tau) = l_1 \mathbb{E}(N_\tau)$. For more complex cases, see section~\ref{sec-stoch-reco}.

\subsection{CDO tranche}
\label{sec-cdo}

For a CDO tranche of attachment point $a$ and detachment point $d$, the tranche loss is
\begin{equation*}
L_\tau^{a,d} = 
\begin{cases}
0 &\text{if } L_\tau \leq a
\\
L_\tau - a &\text{if } a \leq L_\tau \leq d
\\
(d - a) &\text{if } L_\tau \geq d
\rlap{\ .}
\end{cases}
\end{equation*}
The credit and premium legs are
\begin{eqnarray*}
CL_{CDO}^{a,d} &=& \int_0^T d L_{\tau}^{a,d} ZC(0,\tau)
\\
FL_{CDO}^{a,d} &=& s \int_0^T \ud \tau \left((d-a) - L_{\tau}^{a,d}\right) ZC(0,\tau)
\rlap{\ .}
\end{eqnarray*}

Introducing
\begin{equation*}
M_\tau^{a,d} = (d-a) - L_\tau^{a,d}
\end{equation*}
The price is the risk-neutral expected value
\begin{equation}
\mathbb{E}( CL_{CDO}^{a,d}-FL_{CDO}^{a,d}) = - \int_0^T d \mathbb{E}\!\left(M_{\tau}^{a,d}\right) ZC(0,\tau)
- s \int_0^T \ud \tau \, \mathbb{E}\!\left(M_{\tau}^{a,d}\right) ZC(0,\tau)
\label{CDOprice}
\rlap{\ .}
\end{equation}
We decompose the remaining notional on the tranche as
\begin{equation*}
M_\tau^{a,d} =(d-L_\tau)^+ - (a-L_\tau)^+
\rlap{\ .}
\end{equation*}
Its expected value is
\begin{equation*}
\mathbb{E}\!\left(M_\tau^{a,d}\right) = \mathbb{E}\!\left[ (d-L_\tau)^+ \right] - \mathbb{E}\!\left[ (a-L_\tau)^+ \right]
\end{equation*}
which can be computed using equation \eqref{eqPut}. This formula involves an integral in Fourier space. In fact only one integral in Fourier space is needed when computing the price of a CDO tranche: by linearity, all quantities in the Fourier space can be summed before doing a final integral.

\subsection{N\textsuperscript{th}-to-default}

For a $k^{\text{th}}$-to-default let us introduce
\begin{equation*}
I_\tau^{k} = \mathbbm{1}_{N_\tau < k}
\rlap{\ .}
\end{equation*}
The credit and premium legs are
\begin{eqnarray*}
CL_{k} &=& - (1-R) \int_0^T d I_{\tau}^{k} ZC(0,\tau)
\\
FL_{k} &=& s \int_0^T \ud \tau \left(1 - I_{\tau}^{k}\right) ZC(0,\tau) \rlap{\ .}
\end{eqnarray*}

The price is the risk-neutral expected value
\begin{equation}
\mathbb{E}( CL_k-FL_k) = -l_1 \int_0^T d \mathbb{E}\!\left(I_{\tau}^{k}\right) ZC(0,\tau)
- s \int_0^T \ud \tau \left( 1 - \mathbb{E}\!\left(I_{\tau}^{k}\right) \right) ZC(0,\tau)
\rlap{\ .}
\label{Nthprice}
\end{equation}
The probability
\begin{equation*}
\mathbb{E}\!\left(I_{\tau}^{k}\right) = \mathbb{P}(N_\tau<k)
\end{equation*}
is computed using equation \eqref{eqPutDigital}.

\subsection{Options on Index CDS}

As the model we consider is a dynamic model, options on index CDS (swaptions) can also be priced. The owner of such an option get the right but not the obligation at some maturity to enter into an index CDS at a rate $K$ fixed in advance. A \emph{Payer Swaption} gives the right to pay the premium leg and get protection on the basket. A \emph{Receiver Swaption} gives the right to receive the premium leg and sell protection on the basket.

The price of the swap at exercise date $t$ is a stochastic process which depends on $\lambda_t$, $L_t$ and/or $N_t$. Expected values in formula \eqref{CDSprice} have to be evaluated at date $t$:
\begin{equation*}
\mathbb{E}_t( CL_{CDS}-FL_{CDS}) = \int_t^T d \mathbb{E}_t(L_{\tau}) ZC(t,\tau)
- K \int_t^T \ud \tau \left[ N_M - \mathbb{E}_t(N_{\tau}) \right] ZC(t,\tau)
\rlap{\ .}
\end{equation*}
A \emph{Front End Protection} clause can be included in a Payer Swaption: in this case if there are some defaults before the option expiry the corresponding amount $L_t$ is paid at the exercise. A Receiver Swaption would not be exercised if defaults occur prior to expiry.
For a Payer swaption with Front-End Protection the price at exercise date $t$ is
\begin{equation*}
\left[ \mathbb{E}_t( CL_{CDS}-FL_{CDS}) + L_t \right]^+
\end{equation*}
whereas for a Receiver swaption it is
\begin{equation*}
\left[ \mathbb{E}_t( FL_{CDS} - CL_{CDS} ) - L_t\right]^+
\rlap{\ .}
\end{equation*}
The price of the Payer swaption is the risk-neutral expected value
\begin{equation*}
SP = \mathbb{E}_0\!\left( \left[ \mathbb{E}_t\!\left( CL_{CDS} - FL_{CDS} \right) +L_t \right]^+ \right) ZC(0,t)
\end{equation*}
and similarly for the Receiver
\begin{equation*}
SR = \mathbb{E}_0\!\left( \left[ \mathbb{E}_t\!\left( FL_{CDS} - CL_{CDS} \right) -L_t \right]^+ \right) ZC(0,t)
\rlap{\ .}
\end{equation*}

In all cases the price is conditional to the values of $\lambda_t$, $L_t$ and/or $N_t$. The information available at exercise date is supposed to be fully contained in these variables.
The probability density of these variables is obtained in the following way. The joint probability distribution of $\widetilde{N}_t$ and $\lambda_t$ (conditional to $Q_t=0$) is computed by Fourier inversion from the characteristic function
\begin{equation*}
\mathbb{E}_0 \left( e^{v \widetilde{N}_t + w \lambda_t} \mathbbm{1}_{Q_t=0}\right)
\end{equation*}
computed using closed formulas \eqref{affine}, \eqref{solB} and \eqref{solA}.
This joint distribution is combined with the probabilities of $N_t$ and $L_t$ conditional to $\widetilde{N}_t$  to get the joint probability of $\lambda_t$, $N_t$ and $L_t$. 
With fixed default size $l_1$, the joint density is
\begin{multline*}
\ud\mathbb{P}(\lambda_t, N_t=k, L_t) = \sum_{j=0}^{\infty} \ud \mathbb{P}(\lambda_t, \widetilde{N}_t=j, Q_t=0) p_{jk} \delta(L_t-l_1 k) \ud L_t 
\\
+ \delta_{k,N_M} \ud \mathbb{P}(\lambda_t, Q_t>0) \delta(L_t-l_1 k) \ud L_t 
\ .
\end{multline*}

The conditional price is then integrated against this probability density. In the most general case, a 2-dimensional Fourier transform and a 3-dimensional integral should be performed. In fact $L_t$ and $N_t$ carry almost the same information and $L_t$ can be removed from the integration variables and supposed to be equal to $l_1 N_t$. If the recovery rate is deterministic this is exact, in other cases it is a slight approximation. In any case, we have to compute a 2-dimensional Fourier transform and a 2-dimensional integral. Making few approximations, the numerical computations can be reduced to few simple integrals, which can be useful in particular for calibration purpose. This will be detailed in section~\ref{sec-fast-swaption}.

\subsection{Options on CDO tranches and exotic products}

Options on CDO tranches can be priced exactly: the conditional price at the exercise date $t$ is integrated against the probability density of $\lambda_t$ and $L_t$. Other exotic payoffs can also be priced as the model allows to compute the expected value at any date of any future loss.

\subsection{Counterparty risk}

The default risk of a counterparty can be handled within our framework through the auxiliary process $R$ introduced in section~\ref{sec:model}. $R$ is set to 0 at the pricing date. It jumps of one unit when the counterparty defaults.

If a quantity will not be paid if the counterparty has defaulted, the characteristic functions of theorem~\ref{thm} are computed with the last parameter set to $y=-\infty$ (or equivalently $e^y=0$ in all formulas). In this limit, the contributions to the expected values are set to zero when $Q$ is positive.

\subsection{Large pool approximation}

\subsubsection{Limit}

The distribution of $N_T$ conditional to $\widetilde{N}_T$ and $Q_T=0$ that we have denoted by $p_{jk} = \mathbb{P}(N_T = k \vert \widetilde{N}_T=j , Q_T=0)$ has significant values around its mean value
\begin{equation*}
e_j = \mathbb{E}(N_T  \vert \widetilde{N}_T=j , Q_T=0) = N_M\left[1-\left(1-\frac{1}{N_M}\right)^j \right]
\rlap{\ .}
\end{equation*}
The variance can also be computed: it is
\begin{equation*}
v_j = \sum_{k=0}^{N_M} p_{jk} k^2 - e_j^2 = N_M \left[ \left( 1-\frac{1}{N_M} \right)^j + (N_M-1)\left( 1-\frac{2}{N_M} \right)^j -N_M \left( 1-\frac{1}{N_M} \right)^{2j} \right]
\rlap{\ .}
\end{equation*}
We consider a fraction $p$ of the portfolio, \emph{i.e.} an average number of defaults $e_j \sim p N_M$. It corresponds to
\begin{equation*}
j \sim \frac{\ln(1-p)}{\ln(1-1/N_M)}
\rlap{\ .}
\end{equation*}
At fixed portfolio percentage $p$ the variance asymptotic when $N_M$ goes to infinity is
\begin{equation*}
v_j \sim p(1-p) N_M
\rlap{\ .}
\end{equation*}
In proportion of the basket size $N_M$, the standard deviation is therefore
\begin{equation*}
\frac{\sqrt{v_j}}{N_M} \sim \sqrt{\frac{p(1-p)}{N_M}}
\rlap{\ .}
\end{equation*}
When $N_M$ goes to infinity, the distribution becomes peaked around its mean value.
Our large pool approximation consists in neglecting the dispersion of $N_T$ conditional to $\widetilde{N}_T$: we take
\begin{equation}
N_T \simeq N_M\left[1-\left(1-\frac{1}{N_M}\right)^{\widetilde{N}_T} \mathbbm{1}_{Q_T=0} \right]
\rlap{\ .}
\end{equation}

Similarly for the loss process we take
\begin{equation}
L_T = L_M\left[1-e^{-\mu \widetilde{L}_T} \mathbbm{1}_{Q_T=0}\right]
\end{equation}
with $L_M \simeq l_1 N_M$ and $\mu$ such that
\begin{equation*}
\phi_J(-\mu) = \int e^{-_mu l} \ud\nu_J(l) = 1-\frac{1}{N_M}
\rlap{\ .}
\end{equation*}
The value of the constants are chosen so that the mean of $L_T$ conditional to $\widetilde{N}_T$ is $l_1 N_T$.
(By definition the mean default size $l_1$ is $ l_1 = \int j \ud\nu_J(j)$.)

This is an acceptable approximation if the basket size $N_M$ is large enough. Indeed for low losses, $L \simeq \widetilde{L}$ and jump sizes are not reduced. With many defaults, the recovery rate distribution smooths the distribution so that the convolution implied by this approximation does not modify it too much. Moreover senior CDO tranches are usually wide which means that the short scale information of the distribution that we lose is not very important. The only case where it would be important to perform an exact computation would be the pricing of a N\textsuperscript{th}-to-default where the combinatorics has to be performed or at least approximated. As far as only the integral of the loss or default number on some range is considered, this approximation is reasonable.

\subsubsection{Expected loss}

In order to compute CDS prices, the expected loss
\begin{equation*}
\mathbb{E}(L_T ) = L_M \left[ 1-\mathbb{E}\!\left(e^{-\mu \widetilde{L}_T} \mathbbm{1}_{Q_T=0} \right) \right]
\end{equation*}
can be computed analytically: it corresponds to the characteristic function we have computed with $u=-\mu$, $x=-\infty$, $v=w=y=0$:
\begin{equation*}
\mathbb{E}(L_T) = L_M \left[ 1-e^{A\!\left(0,T;-\mu,0,0,-\infty,0\right) + B\!\left(0,T;-\mu,0,0,-\infty,0\right) \lambda_0} \right]
\rlap{\ .}
\end{equation*}

As we said above, constants $L_M$ and $\mu$ are such that it is equal to
\begin{equation*}
\begin{split}
\mathbb{E}(L_T )  &= l_1 \mathbb{E}(N_T) =   l_1 N_M \left( 1-\mathbb{E}\!\left[\left(1-\frac{1}{N_M} \right)^{\widetilde{N}_T} \mathbbm{1}_{Q_T=0} \right] \right)
\\
&= l_1 N_M \left[ 1-e^{A\!\left(0,T;0,\ln\!\left( 1-\frac{1}{N_M}\right),0,-\infty,0\right) + B\!\left(0,T;0,\ln\!\left( 1-\frac{1}{N_M}\right),0,-\infty,0\right) \lambda_0} \right]
\rlap{\ .}
\end{split}
\end{equation*}
By construction, this is in fact the exact expression of section \ref{sec-loss}.

\subsubsection{Expected tranche loss}

For CDOs, the pricing of section \ref{sec-cdo} using exact formulas is fast enough. However we show here how it can be done in our large pool approximation. It is similar to the pricing of equity options with Levy models.
We have to compute the expected value of the tranche loss. It can be obtained as the difference of two terms of the form
\begin{equation*}
\mathbb{E}\!\left[ (K - L_T)^+ \right]
= L_M \mathbb{E}\!\left[ \left( e^{-\mu \widetilde{L}_T} - e^{-\mu \widetilde{K}} \right)^+  \mathbbm{1}_{Q_T=0} \right]
\end{equation*}
with
\begin{equation}
\widetilde{K} = - \frac{1}{\mu} \ln\!\left( 1-\frac{K}{L_M} \right)
\label{eqK}
\rlap{\ .}
\end{equation}

To compute this quantity we introduce a second distribution $\ud \widehat{\mathbb{P}}(\widetilde{L}_T,Q_T)$ for the joint law of $\widetilde{L}_T$ and $Q_T$ with an explicit expression for the density function, such that it has the same mean value of $e^{-\mu\widetilde{L}_T} \mathbbm{1}_{Q_T=0}$ as the real distribution $\ud \mathbb{P}(\widetilde{L}_T,Q_T)$ coming from the double stochastic process, \emph{i.e.} the same mean value of $L_t$.

The difference between both distributions for this quantity is computed by inverse Fourier transform from
\begin{equation*}
\begin{split}
\int_{-\infty}^{+\infty} d \widetilde{K} e^{u \widetilde{K}} &\left(\mathbb{E} - \widehat{\mathbb{E}}\right)\!\left[ ( K - L_t )^+  \right]
\\
&= \int_{-\infty}^{+\infty} d \widetilde{K} e^{u \widetilde{K}} \int_{-\infty}^{\widetilde{K}} \left(\ud \mathbb{P}(\widetilde{L}_T,Q_T) - \ud \widehat{\mathbb{P}}(\widetilde{L}_T,Q_T) \right) L_M \left( e^{-\mu \widetilde{L}_T} - e^{-\mu \widetilde{K}} \right) \mathbbm{1}_{Q_T=0}
\\
&= L_M \int_{-\infty}^{+\infty} \left( \ud \mathbb{P}(\widetilde{L}_T,Q_T) - \ud \widehat{\mathbb{P}}(\widetilde{L}_T,Q_T) \right) \int_{\widetilde{L}_T}^{+\infty} \ud \widetilde{K} e^{u\widetilde{K}} \left( e^{-\mu \widetilde{L}_T} - e^{-\mu \widetilde{K}}  \right) \mathbbm{1}_{Q_T=0}
\\
&= L_M \int_{-\infty}^{+\infty} \left(\ud \mathbb{P}(\widetilde{L}_T,Q_T) - \ud \widehat{\mathbb{P}}(\widetilde{L}_T,Q_T) \right) \left( \frac{1}{u-\mu} - \frac{1}{u} \right) e^{\left( u - \mu \right) \widetilde{L}_T} \mathbbm{1}_{Q_T=0}
\\
&= \frac{\mu L_M}{u \left(u-\mu\right)} \left[ \mathbb{E}\!\left( e^{\left( u - \mu \right) \widetilde{L}_T} \mathbbm{1}_{Q_T=0} \right) - \widehat{\mathbb{E}}\!\left( e^{\left( u - \mu \right) \widetilde{L}_T} \mathbbm{1}_{Q_T=0} \right) \right]
\end{split}
\end{equation*}
where we have used the fact that by construction of $\ud \hat{\nu}_{\widetilde{L}_t}$
\begin{eqnarray*}
\int_{-\infty}^{+\infty} \left(\ud \mathbb{P}(\widetilde{L}_T,Q_T) - \ud \widehat{\mathbb{P}}(\widetilde{L}_T,Q_T) \right) 1 &=& 0
\\
\int_{-\infty}^{+\infty} \left(\ud \mathbb{P}(\widetilde{L}_T,Q_T) - \ud \widehat{\mathbb{P}}(\widetilde{L}_T,Q_T) \right) e^{-\mu \widetilde{L}_T} \mathbbm{1}_{Q_T=0} &=& 0
\rlap{\ .}
\end{eqnarray*}

Inverting this Fourier transform we get
\begin{equation*}
\mathbb{E}\!\left[ (K-L_T)^+ \right]
= \widehat{\mathbb{E}}\!\left[ (K-L_T)^+ \right]
+ \frac{1}{2\pi}\int_{-\infty}^{+\infty} dp \, e^{-i p \widetilde{K}} \frac{\mu L_M}{i p \left(i p-\mu\right)} \left[ \mathbb{E}\!\left( e^{\left( i p - \mu \right) \widetilde{L}_T} \right) - \widehat{\mathbb{E}}\!\left( e^{\left( i p - \mu \right) \widetilde{L}_T} \right) \right]
\end{equation*}
where $\widetilde{K}$ is given by equation \eqref{eqK}. Expected values $\widehat{\mathbb{E}}$ under $\ud \widehat{\mathbb{P}}(\widetilde{L}_T,Q_T)$ are computed in an explicit way. $\mathbb{E}\!\left( e^{\left( i p - \mu \right) \widetilde{L}_T} \mathbbm{1}_{Q_T=0} \right)$ is given by theorem \ref{thm}:
\begin{equation*}
\mathbb{E}\!\left( e^{\left( i p - \mu \right) \widetilde{L}_T} \mathbbm{1}_{Q_T=0} \right) =
e^{ A\!\left( 0,T;i p - \mu,0,0,-\infty,0\right) + B\!\left( 0,T;i p - \mu,0,0,-\infty,0\right) \lambda_0}
\rlap{\ .}
\end{equation*}

There are many choices for the auxiliary distribution $\ud \widehat{\mathbb{P}}(\widetilde{L}_T,Q_T)$. We give two examples:
\begin{itemize}
\item The simplest choice is
\begin{equation*}
\ud \widehat{\mathbb{P}}(\widetilde{L}_T,Q_T) = \delta\!\left(\widetilde{L}_T+\frac{1}{\mu} \ln\!\left[ \mathbb{E}\!\left( e^{-\mu \widetilde{L}_t } \mathbbm{1}_{Q_T=0} \right) \right] \right) \delta_{Q_T,0}
\rlap{\ .}
\end{equation*}
It corresponds to the computation of the "time value" of the tranche loss by Fourier transform, which is then added to the "intrinsic value"
\begin{equation*}
\widehat{\mathbb{E}}\!\left[ (K-L_T)^+ \right] = L_M \left( \mathbb{E}\!\left[ e^{-\mu \widetilde{L}_T} \mathbbm{1}_{Q_T=0} \right] - e^{-\mu \widetilde{K}} \right)^+
\rlap{\ .}
\end{equation*}
\item Numerically it may be a better idea to take the product of a distribution on $Q_T$ such that
\begin{equation*}
\widehat{\mathbb{P}}(Q_T=0) = \mathbb{P}(Q_T=0)
\end{equation*}
and a Poisson distribution on $\frac{\widetilde{L}_T}{l_1}$ with parameter
\begin{equation*}
\hat{\lambda}_T = \frac{-\ln\!\left[ \frac {\mathbb{E}\!\left( e^{-\mu \widetilde{L}_T } \mathbbm{1}_{Q_T=0} \right)}{\mathbb{E}\!\left( \mathbbm{1}_{Q_T=0} \right)} \right]}{1-e^{-\mu l_1}}
\rlap{\ .}
\end{equation*}
For this distribution the quantity $\widehat{\mathbb{E}}\!\left[ (K-L_T)^+ \right] = L_M \widehat{\mathbb{E}}\!\left[ \left( e^{-\mu \widetilde{L}_T} - e^{-\mu \widetilde{K}} \right)^+ \mathbbm{1}_{Q_T=0} \right]$ can be computed exactly from the Poisson distribution as a finite sum.
\end{itemize}

Numerically, one can perform the Fourier inversion either using the FFT algorithm or using a one-dimensional quadrature. If many detachment points $K$ are needed the FFT is the natural choice. In the usual case where only one or a few tranches are priced (for example the five tranches of iTraxx) performing the one-dimensional integrals directly will be faster.

\subsubsection{Fast pricing of index swaptions}
\label{sec-fast-swaption}

\begin{proposition}
The price of a Receiver Swaption of strike $K$, maturity $t$ and tenor $T-t$ with Front End Protection can be computed (approximately) as a one-dimensional integral in Fourier space.

Let define the function
\begin{equation*}
h(\tau) = \left[ \delta(\tau-T) + \frac{K}{l_1} + r(\tau) \right] ZC(t,\tau)
\end{equation*}
where $r$ is the instantaneous interest rate
and
\begin{equation*}
L_\star(\lambda_t) = \frac{1}{\mu} \ln\!\left( \int_t^T \ud \tau \, h(\tau) e^{A\!\left(t,\tau;-\mu,0,0,-\infty,0\right) + B\!\left(t,\tau;-\mu,0,0,-\infty,0\right) \lambda_t} \right)
\rlap{\ .}
\end{equation*}
If $L_\star(0)$ is negative, the Receiver Swaption is never exercised and its price is zero.
Otherwise, let $\Lambda \geq 0$ be a typical value\footnote{Which may be as simple as $\Lambda=0$.} for $\lambda_t$ such that $L_\star(\Lambda) \geq 0$. Introducing
\begin{eqnarray*}
\alpha_\Lambda &=& \frac{1}{\mu} \ln\!\left( \int_t^T \ud \tau \, h(\tau) e^{A\!\left(t,\tau;-\mu,0,0,-\infty,0\right) + B\!\left(t,\tau;-\mu,0,0,-\infty,0\right) \Lambda}\right)
\\
\beta_\Lambda &=& \frac{1}{\mu} \frac{\displaystyle \int_t^T \ud \tau h(\tau) e^{A\!\left(t,\tau;-\mu,0,0,-\infty,0\right) + B\!\left(t,\tau;-\mu,0,0,-\infty,0\right) \Lambda} B\!\left(t,\tau;-\mu,0,0,-\infty,0\right)}{\displaystyle \int_t^T \ud \tau h(\tau) e^{A\!\left(t,\tau;-\mu,0,0,-\infty,0\right) + B\!\left(t,\tau;-\mu,0,0,-\infty,0\right) \Lambda}}
\rlap{\ ,}
\end{eqnarray*}
the price of the Receiver Swaption can be approximated by
\begin{multline*}
SR = L_M \frac{1}{\pi} \int_{-\infty}^{+\infty} dp \,
\frac{\sin(p(\alpha_\Lambda - \beta_\Lambda \Lambda))}{p} \left[ \int_t^T \ud \tau \, h(\tau) e^{A\!\left(t,\tau;-\mu,0,0,-\infty,0\right)} \right.
\\
\left.
\mathbb{E}\!\left(e^{-i p \widetilde{L}_t + i p \beta_\Lambda \lambda_t - \mu \widetilde{L}_t + B\!\left(t,\tau;-\mu,0,0,-\infty,0\right)\lambda_t }\right)
- \mathbb{E}\!\left( e^{-i p \widetilde{L}_t + i p \beta_\Lambda \lambda_t} \mathbbm{1}_{Q_t=0}\right) \right] ZC(0,t)
\end{multline*}
with
\begin{multline*}
\mathbb{E}\!\left(e^{-i p \widetilde{L}_t + i p \beta_\Lambda \lambda_t - \mu \widetilde{L}_t + B\!\left(t,\tau;-\mu,0,0,-\infty,0\right)\lambda_t }\right) =
\\
e^{A\!\left[0,t;-i p - \mu,0,i p \beta_\Lambda + B\!\left(t,\tau;-\mu,0,0,-\infty,0\right),0 \right] + B\!\left[0,t;-i p - \mu,0,i p + B\!\left(t,\tau;-\mu,0,0,-\infty,0\right) \beta_\Lambda ,0\right] \lambda_0}
\end{multline*}
and
\begin{equation*}
\mathbb{E}\!\left( e^{-i p \widetilde{L}_t + i p \beta_\Lambda \lambda_t} \mathbbm{1}_{Q_t=0} \right)
=
e^{A(0,t;-i p,0,i p \beta_\Lambda,-\infty,0) + B(0,t;-i p,0,i p \beta_\Lambda,-\infty,0) \lambda_0}
\rlap{\ .}
\end{equation*}

The price of a Payer Swaption is obtained by Call-Put parity:
\begin{equation*}
SP = L_M \left[ 1 - \int_t^T \ud \tau \, h(\tau) \mathbb{E}\!\left( e^{-\frac{\widetilde{L}_{\tau}}{L_M}} \mathbbm{1}_{Q_\tau=0} \right) \right] ZC(0,t) - SR
\end{equation*}
with
\begin{equation*}
\mathbb{E}\!\left( e^{-\frac{\widetilde{L}_{\tau}}{L_M}} \mathbbm{1}_{Q_\tau=0} \right)
=
e^{A\!\left(0,\tau;-\mu,0,0,-\infty,0\right) + B\!\left(0,\tau;-\mu,0,0,-\infty,0\right) \lambda_0}
\rlap{\ .}
\end{equation*}
\end{proposition}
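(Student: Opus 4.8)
The plan is to reduce the exercise-date payoff to a single exponential, linearize only the exercise boundary, and then recognize the surviving expectations as characteristic functions supplied by Theorem~\ref{thm}. First I would rewrite the conditional swap value $\mathbb{E}_t(FL_{CDS}-CL_{CDS})-L_t$ in the large-pool approximation. Integrating the credit leg by parts, $\int_t^T ZC(t,\tau)\,\ud L_\tau = ZC(t,T)L_T - L_t + \int_t^T r(\tau)L_\tau\, ZC(t,\tau)\,\ud\tau$, and inserting the large-pool expressions $\mathbb{E}_t(L_\tau)=L_M[1-\Phi_t(\tau)]$ and $\mathbb{E}_t(N_M-N_\tau)=\tfrac{L_M}{l_1}\Phi_t(\tau)$, where $\Phi_t(\tau):=\mathbb{E}_t(e^{-\mu\widetilde{L}_\tau}\mathbbm{1}_{Q_\tau=0})$, the two $L_t$ terms cancel, the constant part collapses to $-L_M$ via $\int_t^T r(\tau)ZC(t,\tau)\,\ud\tau = 1-ZC(t,T)$, and the $\delta(\tau-T)$, $K/l_1$ and $r(\tau)$ pieces assemble into the weight $h$. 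This gives the compact identity $\mathbb{E}_t(FL_{CDS}-CL_{CDS})-L_t = L_M\big[\int_t^T h(\tau)\Phi_t(\tau)\,\ud\tau-1\big]$.

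Next, Theorem~\ref{thm} yields $\Phi_t(\tau)=e^{-\mu\widetilde{L}_t}\mathbbm{1}_{Q_t=0}\,e^{A(t,\tau;-\mu,0,0,-\infty,0)+B(t,\tau;-\mu,0,0,-\infty,0)\lambda_t}$, so the $\tau$-integral equals $e^{\mu(L_\star(\lambda_t)-\widetilde{L}_t)}\mathbbm{1}_{Q_t=0}$ by the very definition of $L_\star$. The Receiver payoff $[\,\cdot\,]^+$ is therefore $L_M\big(e^{\mu(L_\star(\lambda_t)-\widetilde{L}_t)}-1\big)\mathbbm{1}_{\widetilde{L}_t<L_\star(\lambda_t)}\mathbbm{1}_{Q_t=0}$; it vanishes identically when $L_\star(0)<0$, which is the first assertion.

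The approximation enters in the exercise region only. I would linearize $L_\star(\lambda_t)\simeq\alpha_\Lambda+\beta_\Lambda(\lambda_t-\Lambda)$ about a typical $\Lambda$, checking by differentiation under the integral sign that $\alpha_\Lambda=L_\star(\Lambda)$ and $\beta_\Lambda=L_\star'(\Lambda)$ are exactly the stated expressions, while keeping the amplitude $e^{\mu L_\star(\lambda_t)}=\int_t^T h\,e^{A+B\lambda_t}\,\ud\tau$ exact. The linearized indicator $\mathbbm{1}_{W<c}$, with $W=\widetilde{L}_t-\beta_\Lambda\lambda_t$ and $c=\alpha_\Lambda-\beta_\Lambda\Lambda$, is then represented by the Dirichlet kernel $\frac{1}{\pi}\int_{-\infty}^{+\infty}\frac{\sin(pc)}{p}\,e^{-ipW}\,\ud p$. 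The one subtle point is that this kernel reproduces the two-sided $\mathbbm{1}_{|W|<c}$ rather than the one-sided $\mathbbm{1}_{W<c}$ we want; the two coincide precisely because $B(t,\tau;-\mu,\dots)\le 0$ forces $\beta_\Lambda\le 0$, so that $W=\widetilde{L}_t-\beta_\Lambda\lambda_t\ge 0\ge -c$. This sign fact is the crux, and I would record it explicitly.

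Finally I would assemble the pieces. Splitting the payoff into its $e^{\mu(\cdot)}$ part and its $-1$ part, inserting the kernel, and using Fubini to pull the deterministic $\tau$-integral outside $\mathbb{E}_0$, each surviving expectation is, since $\widetilde{L}_t$ and $\lambda_t$ enter the exponent linearly, a characteristic function from $0$ to $t$: the $\lambda_t$-slot is $ip\beta_\Lambda+B(t,\tau;-\mu,\dots)$ for the amplitude term and $ip\beta_\Lambda$ (with $x=-\infty$) for the $-1$ term, the inner $B(t,\tau)$ nesting into the outer $A,B(0,t)$ by the flow (tower) property of the affine system. Evaluating these by Theorem~\ref{thm} produces exactly the stated one-dimensional $p$-integral. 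The Payer price then follows from the parity $[a]^+-[-a]^+=a$ with $a=\mathbb{E}_t(CL_{CDS}-FL_{CDS})+L_t$: taking $\mathbb{E}_0$ and applying the tower property gives $SP-SR=L_M\big[1-\int_t^T h(\tau)\,\mathbb{E}_0(e^{-\mu\widetilde{L}_\tau}\mathbbm{1}_{Q_\tau=0})\,\ud\tau\big]ZC(0,t)$, the last expectation again read off from Theorem~\ref{thm}. The main obstacle is not any single calculation but controlling the boundary linearization and getting the Dirichlet-kernel sign argument correct; the nested affine bookkeeping is routine once that is settled.
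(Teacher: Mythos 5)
Your proposal is correct and follows essentially the same route as the paper: integration by parts to produce the weight $h$, the large-pool substitution, the exercise boundary $L_\star(\lambda_t)$ linearized about $\Lambda$, a Fourier (Dirichlet-kernel) representation of the exercise indicator evaluated via Theorem~\ref{thm}, and call--put parity for the Payer leg. Your symmetric window $\mathbbm{1}_{|W|<c}$ with the explicit sign argument $\beta_\Lambda\le 0$ is exactly the paper's cutoff choice $M=2(\alpha_\Lambda-\beta_\Lambda\Lambda)$, just justified a little more carefully.
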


As Call-Put parity holds for Receiver and Payer Swaptions, we consider here only a Receiver swaption.
The price of a Payer Swaption is given in term of the Receiver price as
\begin{equation*}
SP = \mathbb{E}_0\!\left(  CL_{CDS} - FL_{CDS}  + L_t \right) ZC(0,t) - SR
\rlap{\ .}
\end{equation*}

\begin{proof}
First, replacing $N$ by $L/l_1$, which is an approximation if the recovery rate is stochastic, the option holds on the quantity
\begin{equation*}
\mathbb{E}_t( FL_{CDS}-CL_{CDS}) - L_t =
\frac{K}{l_1} \int_t^T \ud \tau \left[ L_M - \mathbb{E}_t(L_{\tau}) \right] ZC(t,\tau)
- \int_t^T d \mathbb{E}_t(L_{\tau}) ZC(t,\tau) -L_t
\rlap{\ .}
\end{equation*}
Integrating by parts, this can be rephrased as
\begin{equation}
\mathbb{E}_t( FL_{CDS}-CL_{CDS}) - L_t = \int_t^T \ud \tau \, h(\tau) \left[ L_M - \mathbb{E}_t(L_{\tau}) \right] - L_M
\label{swaption1}
\end{equation}
where
\begin{equation*}
h(\tau) = \left[ \delta(\tau-T) + \frac{K}{l_1} + r(\tau) \right] ZC(t,\tau)
\rlap{\ .}
\end{equation*}
$r(\tau)$ is the instantaneous interest rate:
\begin{equation*}
ZC(t,\tau) = e^{\int_t^\tau ds r(s) }
\rlap{\ .}
\end{equation*}
Using $L_\tau \simeq L_M \left( 1-e^{-\mu \widetilde{L}_\tau} \mathbbm{1}_{Q_\tau=0} \right)$, equation \eqref{swaption1} gives
\begin{equation}
\label{swaption2}
\left[ \mathbb{E}_t( FL_{CDS}-CL_{CDS}) - L_t \right]^+ = L_M \left[ \int_t^T \ud \tau \, h(\tau) \mathbb{E}_t\!\left( e^{-\mu \widetilde{L}_{\tau}} \right) - 1 \right]^+
\rlap{\ .}
\end{equation}

The expected value at date $t$ is conditional to $\lambda_t$,  $\widetilde{L}_t$ and $Q_t$. According to theorem \ref{thm}, it is
\begin{equation}
 \mathbb{E}_t\!\left( e^{-\mu \widetilde{L}_{\tau}} \mathbbm{1}_{Q_\tau=0} \right) = e^{A\!\left(t,\tau;-\mu,0,0,-\infty,0\right) + B\!\left(t,\tau;-\mu,0,0,-\infty,0\right) \lambda_t - \mu \widetilde{L}_t }  \mathbbm{1}_{Q_t=0}
 \label{swaptionThm1}
\rlap{\ .}
\end{equation}
If $Q_t=0$ and for a given value of $\lambda_t$, the quantity inside the bracket in equation \eqref{swaption2} decreases with $\widetilde{L}_t$. There is a strike value $L_\star(\lambda_t)$ for which this quantity becomes negative and the payoff vanishes:
\begin{equation}
L_\star(\lambda_t) = \frac{1}{\mu} \ln\!\left( \int_t^T \ud \tau \, h(\tau) e^{A\!\left(t,\tau;-\mu,0,0,-\infty,0\right) + B\!\left(t,\tau;-\mu,0,0,-\infty,0\right) \lambda_t} \right)
\label{swaptionLstar}
\rlap{\ .}
\end{equation}
If the interest rate is not too negative, the integral is always positive. Depending of the strike $K$ and the value of $\lambda_t$ it may give a negative $L_\star(\lambda_t)$; in this case all values of $\widetilde{L}_t$ will contribute to the final payoff.

In particular if $L_\star(0) \leq 0$, and also if $Q_t>0$, the Receiver swaption is never exercised and its price is zero:
\begin{equation}
\label{ReceiverExercised}
SR = 0
\rlap{\ .}
\end{equation}
In the following we consider the nontrivial case where $L_\star(0) > 0$.

We make an approximation at this point: $L_\star(\lambda_t)$ is approximated by an affine function in $\lambda_t$. Starting from a typical value for $\lambda_t$ that we denote by $\Lambda$, equation \eqref{swaptionLstar} can be rewritten as
\begin{multline}
L_\star(\lambda_t) =\frac{1}{\mu} \ln\!\left( \int_t^T \ud \tau \, h(\tau) e^{A\!\left(t,\tau;-\mu,0,0,-\infty,0\right) + B\!\left(t,\tau;-\mu,0,0,-\infty,0\right) \Lambda} \right)
\\
+ \frac{1}{\mu} \ln\!\left( \int_t^T \ud \tau \, g_\Lambda(\tau) e^{B\!\left(t,\tau;-\mu,0,0,-\infty,0\right) (\lambda_t-\Lambda)} \right)
\label{swaptionLstar2}
\end{multline}
where $g_\Lambda(\tau)$ is a measure on time given by
\begin{equation*}
g(\tau) = \frac{h(\tau) e^{A\!\left(t,\tau;-\mu,0,0,-\infty,0\right) + B\!\left(t,\tau;-\mu,0,0,-\infty,0\right) \Lambda}}{\displaystyle \int_t^T \ud \tau' \, h(\tau') e^{A\!\left(t,\tau',-\mu,0,0\right) + B\!\left(t,\tau',-\mu,0,0\right) \Lambda}}
\rlap{\ .}
\end{equation*}
We finally linearize $L_\star(\lambda_t)$ as
\begin{equation*}
L_\star(\lambda_t) = \alpha_\Lambda + \beta_\Lambda (\lambda_t-\Lambda) + O\left( (\lambda_t-\Lambda)^2 \right)
\end{equation*}
where
\begin{eqnarray*}
\alpha_\Lambda &=& \frac{1}{\mu} \ln\!\left( \int_t^T \ud \tau \, h(\tau) e^{A\!\left(t,\tau;-\mu,0,0,-\infty,0\right) + B\!\left(t,\tau;-\mu,0,0,-\infty,0\right) \Lambda}\right)
\\
\beta_\Lambda &=& \frac{1}{\mu} \int_t^T \ud \tau g_\Lambda(\tau) B\!\left(t,\tau;-\mu,0,0,-\infty,0\right)
\rlap{\ .}
\end{eqnarray*}
To support this approximation, we can notice first that for short tenors, the Dirac function dominates in $h(\tau)$, which means that the measure $g_\Lambda(\tau)$ reduces to a Dirac delta $\delta(\tau-T)$. In this limit case, equation \eqref{swaptionLstar2} shows that the approximation is exact. At the opposite limit, for tenors which are long compared to the characteristic time $1/\kappa$, $B(t,\tau,\cdots)$ is almost everywhere equal to its limit value $-B_-(\cdots)$. The exponential can be factorized outside the integral, and the approximation becomes exact in this limit. Between these two cases (short and large tenors), the approximation can be controlled. The first term in the expansion in $\lambda_t - \Lambda$ which is neglected is the quadratic term. Its coefficient is the variance of $B(t,\tau,\cdots)$ under the measure $g_\Lambda(\tau)$. Note that errors on $L_\star(\lambda_t)$ have no importance where the joint density of $\lambda_t$ and $\widetilde{L}_t$ is negligible. This is especially true for negative $L_\star(\lambda_t)$ or negative $\lambda_t$ where the density is zero.

The price of the swaption is given by the risk-neutral expected value: the conditional payoff must be integrated over $\lambda_t$, $\widetilde{L}_t$ and $Q_t$ using the joint distribution of these two stochastic variables. In order to do this, we rewrite the payoff \eqref{swaption2} using the boundary $L_\star(\lambda_t)$ as
\begin{equation}
\left[ \mathbb{E}_t( FL_{CDS}-CL_{CDS}) - L_t \right]^+ =
L_M \left( \int_t^T \ud \tau \, h(\tau) \mathbb{E}_t\!\left( e^{-\mu \widetilde{L}_{\tau}} \mathbbm{1}_{Q_\tau=0} \right) \mathbbm{1}_{\widetilde{L}_t < L_\star(\lambda_t)}
-  \mathbbm{1}_{Q_t=0} \mathbbm{1}_{\widetilde{L}_t \leq L_\star(\lambda_t)} \right)
\label{swaption2terms}
\rlap{\ .}
\end{equation}
The expected value at time 0 of this expression will be computed by Fourier transform in $\widetilde{L}_t$. Both terms vanish at $+\infty$ but are constant or diverge in $-\infty$. As $\widetilde{L}_t$ has zero probability of being negative, a cut-off can safely be set at a negative value. Therefore
\begin{equation*}
\mathbbm{1}_{\widetilde{L}_t < L_\star(\lambda_t)}
\end{equation*}
can be replaced in both terms by
\begin{equation*}
\mathbbm{1}_{L_\star(\lambda_t) - M \leq \widetilde{L}_t \leq L_\star(\lambda_t)}
\end{equation*}
for any $M > L_\star(0)$ without changing their expected value.

The Fourier transform in $\widetilde{L}_t$ of this function is
\begin{equation*}
\int_{-\infty}^{+\infty} \ud \widetilde{L}_t \, e^{i p \widetilde{L}_t} \mathbbm{1}_{L_\star(\lambda_t) - M \leq \widetilde{L}_t \leq L_\star(\lambda_t)}
=
\frac{1-e^{-i p M}}{i p } e^{i p L_\star(\lambda_t)}
\rlap{\ .}
\end{equation*}
The inverse Fourier transform is the original function
\begin{equation}
\mathbbm{1}_{L_\star(\lambda_t) - M \leq \widetilde{L}_t \leq L_\star(\lambda_t)} =
\frac{1}{2\pi} \int_{-\infty}^{+\infty} \ud p \, e^{-i p \widetilde{L}_t} \frac{1-e^{-i p M}}{i p } e^{i p L_\star(\lambda_t)}
\label{inverseFourier}
\end{equation}
and gives the expected value today:
\begin{equation*}
\mathbb{E}\!\left( \mathbbm{1}_{Q_t=0} \mathbbm{1}_{\widetilde{L}_t \leq L_\star(\lambda_t)}\right) =
\frac{1}{2\pi} \int_{-\infty}^{+\infty} \ud p \, \frac{1-e^{-i p M}}{i p } \mathbb{E}\!\left(e^{-i p \widetilde{L}_t + i p L_\star(\lambda_t)}  \mathbbm{1}_{Q_t=0}\right)
\rlap{\ .}
\end{equation*}

Using the linear approximation $L_\star(\lambda_t) \simeq \alpha_\Lambda + \beta_\Lambda(\lambda_t-\Lambda)$, this expected value can be computed as a one-dimensional integral
\begin{equation*}
\mathbb{E}\!\left( \mathbbm{1}_{Q_t=0}\mathbbm{1}_{\widetilde{L}_t \leq L_\star(\lambda_t)}\right)
=
\frac{1}{2\pi} \int_{-\infty}^{+\infty} \ud p \,
\frac{1-e^{-i p M}}{i p} e^{i p (\alpha_\Lambda - \beta_\Lambda \Lambda)} \mathbb{E}\!\left( e^{-i p \widetilde{L}_t + i p \beta_\Lambda \lambda_t} \mathbbm{1}_{Q_t=0} \right)
\end{equation*}
where $\mathbb{E}\!\left( e^{-i p \widetilde{L}_t + i p \beta_\Lambda \lambda_t} \mathbbm{1}_{Q_t=0} \right)$ is computed using theorem \ref{thm}:
\begin{equation*}
\mathbb{E}\!\left( e^{-i p \widetilde{L}_t + i p \beta_\Lambda \lambda_t} \mathbbm{1}_{Q_t=0} \right)
=
e^{A(0,t;-i p,0,i p \beta_\Lambda,-\infty,0) + B(0,t;-i p,0,i p \beta_\Lambda,-\infty,0) \lambda_0}
\rlap{\ .}
\end{equation*}

As we are in the case $L_\star(0)>0$, this formula can be simplified if $M$ is set to
\begin{equation*}
M = 2 (\alpha_\Lambda - \beta_\Lambda \Lambda) \simeq 2 L_\star(0) > L_\star(0)
\rlap{\ .}
\end{equation*}
With this choice for $M$, the expected value is
\begin{equation*}
\mathbb{E}\!\left( \mathbbm{1}_{Q_t=0} \mathbbm{1}_{\widetilde{L}_t \leq L_\star(\lambda_t)}\right)
=
\frac{1}{\pi} \int_{-\infty}^{+\infty} \ud p \,
\frac{\sin(p(\alpha_\Lambda - \beta_\Lambda \Lambda))}{p} \mathbb{E}\!\left( e^{-i p \widetilde{L}_t + i p \beta_\Lambda \lambda_t} \mathbbm{1}_{Q_t=0} \right)
\rlap{\ .}
\end{equation*}

This gives the expected value of the second term in formula \eqref{swaption2terms}. The expected value of the first term
\begin{equation*}
\int_t^T \ud \tau \, h(\tau) \mathbb{E}_t\!\left( e^{-\mu \widetilde{L}_{\tau}} \mathbbm{1}_{Q_\tau=0} \right) \mathbbm{1}_{\widetilde{L}_t < L_\star(\lambda_t)}
\end{equation*}
is computed similarly. First, remind that the conditional expected value is given by equation \eqref{swaptionThm1}:
\begin{equation*}
\mathbb{E}_t\!\left( e^{-\mu \widetilde{L}_{\tau}} \mathbbm{1}_{Q_\tau=0} \right) =
e^{A\!\left(t,\tau;-\mu,0,0,-\infty,0\right) + B\!\left(t,\tau;-\mu,0,0,-\infty,0\right) \lambda_t - \mu \widetilde{L}_t} \mathbbm{1}_{Q_t=0}
\rlap{\ .}
\end{equation*}
$\mathbbm{1}_{\widetilde{L}_t < L_\star(\lambda_t)}$ is replaced by $\mathbbm{1}_{L_\star(\lambda_t) - M \leq \widetilde{L}_t \leq L_\star(\lambda_t)}$ which can be rewritten using equation \eqref{inverseFourier}. This gives
\begin{multline*}
\mathbb{E}_t\!\left( e^{-\mu \widetilde{L}_{\tau}} \mathbbm{1}_{Q_\tau=0} \right) \mathbbm{1}_{L_\star(\lambda_t) - M \leq \widetilde{L}_t \leq L_\star(\lambda_t)} =
\\
e^{A\!\left(t,\tau;-\mu,0,0,-\infty,0\right) + B\!\left(t,\tau;-\mu,0,0,-\infty,0\right) \lambda_t - \mu \widetilde{L}_t} \mathbbm{1}_{Q_t=0}
\frac{1}{2\pi} \int_{-\infty}^{+\infty} \ud p \, e^{-i p \widetilde{L}_t} \frac{1-e^{-i p M}}{i p } e^{i p L_\star(\lambda_t)}
\rlap{\ .}
\end{multline*}
The expected value at time 0 of this expression gives
\begin{multline*}
\mathbb{E}\!\left[ \mathbb{E}_t\!\left( e^{-\mu \widetilde{L}_{\tau}} \mathbbm{1}_{Q_\tau=0} \right) \mathbbm{1}_{\widetilde{L}_t < L_\star(\lambda_t)}\right] =
\\
 e^{A\!\left(t,\tau;-\mu,0,0,-\infty,0\right)} \frac{1}{2\pi} \int_{-\infty}^{+\infty} \ud p \,
\frac{1-e^{-i p M}}{i p } \mathbb{E}\!\left(e^{-i p \widetilde{L}_t + i p L_\star(\lambda_t) - \mu \widetilde{L}_t + B\!\left(t,\tau;-\mu,0,0,-\infty,0\right)\lambda_t } \mathbbm{1}_{Q_t=0} \right)
\rlap{\ .}
\end{multline*}

Using again the linear approximation $L_\star(\lambda_t) \simeq \alpha_\Lambda + \beta_\Lambda(\lambda_t-\Lambda)$, this expression can also be evaluated as a one-dimensional integral
\begin{multline*}
\mathbb{E}\!\left[ \mathbb{E}_t\!\left( e^{-\mu \widetilde{L}_{\tau}} \mathbbm{1}_{Q_\tau=0} \right) \mathbbm{1}_{\widetilde{L}_t < L_\star(\lambda_t)}\right] =
e^{A\!\left(t,\tau;-\mu,0,0,-\infty,0\right)}
\\
\frac{1}{2\pi} \int_{-\infty}^{+\infty} \ud p \,
\frac{1-e^{-i p M}}{i p }
e^{i p (\alpha_\Lambda - \beta_\Lambda \Lambda)}
\mathbb{E}\!\left(e^{-i p \widetilde{L}_t + i p \beta_\Lambda \lambda_t - \mu \widetilde{L}_t + B\!\left(t,\tau;-\mu,0,0,-\infty,0\right)\lambda_t } \mathbbm{1}_{Q_t=0} \right)
\end{multline*}
where $\mathbb{E}\!\left(e^{-i p \widetilde{L}_t + i p \beta_\Lambda \lambda_t - \mu \widetilde{L}_t + B\!\left(t,\tau;-\mu,0,0,-\infty,0\right)\lambda_t } \mathbbm{1}_{Q_t=0} \right)$ is given by theorem \ref{thm}:
\begin{multline*}
\mathbb{E}\!\left(e^{-i p \widetilde{L}_t + i p \beta_\Lambda \lambda_t - \mu \widetilde{L}_t + B\!\left(t,\tau;-\mu,0,0,-\infty,0\right)\lambda_t } \mathbbm{1}_{Q_t=0} \right) =
\\
e^{A\!\left[0,t;-i p - \mu,0,i p \beta_\Lambda + B\!\left(t,\tau;-\mu,0,0,-\infty,0\right),0 \right] + B\!\left[0,t;-i p - \mu,0,i p + B\!\left(t,\tau;-\mu,0,0,-\infty,0\right) \beta_\Lambda ,0\right] \lambda_0}
\rlap{\ .}
\end{multline*}

With the choice $M = 2(\alpha_\Lambda - \beta_\Lambda \Lambda)$ the integral is
\begin{multline*}
\mathbb{E}\!\left[ \mathbb{E}_t\!\left( e^{-\mu \widetilde{L}_{\tau}} \mathbbm{1}_{Q_\tau=0} \right) \mathbbm{1}_{\widetilde{L}_t < L_\star(\lambda_t)}\right] =
e^{A\!\left(t,\tau;-\mu,0,0,-\infty,0\right)}
\\
\frac{1}{\pi} \int_{-\infty}^{+\infty} \ud p \,
\frac{\sin(p(\alpha_\Lambda - \beta_\Lambda \Lambda))}{p}
\mathbb{E}\!\left(e^{-i p \widetilde{L}_t + i p \beta_\Lambda \lambda_t - \mu \widetilde{L}_t + B\!\left(t,\tau;-\mu,0,0,-\infty,0\right)\lambda_t } \mathbbm{1}_{Q_t=0} \right)
\rlap{\ .}
\end{multline*}

The summation over coupon dates finishes the computation of the Swaption price.
\end{proof}

\section{Calibration}
\label{sec-calibration}

\subsection{Algorithm}

The model can be calibrated on many instruments, depending on what it is used for. As they are the most liquid products, the market spread of index CDS should be matched. Other relatively liquid instruments, at least on traded indices, are standard CDO tranches. One may therefore want to match CDO market quotes for all available maturities. As the model for $\widetilde{L}$ can be the sum of many basic building blocks with their own piecewise constant parameters sets, in principle many (non-arbitrageable) sets of price can be reached. However, introducing too many parameters make the calibration more difficult and more unstable. With too many parameters, the model can even be under-specified. So we prefer to stick with one single building block. Moreover, time-dependency is introduced in the minimal way of section \ref{timeDependency} through a piecewise affine change of time. We calibrate the model to match as closely as possible market quotes for CDO tranches (either one maturity or all maturities) while matching precisely quotes for the index CDS. (It would be useful to calibrate also on swaptions: CDS or CDO tranches are not really sensitive to some dynamical properties of the model.)

The calibration routine uses the following steps:
\begin{enumerate}
\item Choose values for parameters $\lambda_0$, $\lambda_\infty$, $\kappa$, $\sigma$, $n$, $\theta$, $\gamma$, $\alpha$ and $\beta$.
\item Conditionally to this set of parameters, calibrate the slopes $a_i$ of the time-dependence function $t=t(\tau)$ by a bootstrap method. One dichotomy or Newton algorithm for each CDS maturity is needed, which is fast because there is a closed formula for the expected loss. CDS prices are computed as explained in section \ref{sec-cds}.
\item Price CDO tranches (as in section \ref{sec-cdo}) for this set of parameters and time-dependency function and compute the sum of squared errors, relatively to the bid-ask spreads.
\item Loop on step 2 according to some optimization algorithm.
\end{enumerate}
We use differential evolution \cite{storn1995differential} algorithm to find a global minimum of the objective function.

\subsection{Numerical results}

We take as an example iTraxx Europe Series 9 Version 1 (RED Code: 2I666VAI6) data on September 30th, 2009. Quotes for the index, 9\%--12\% and 12\%--22\% tranches are spreads in bps. 0\%--3\%, 3\%--6\% and 6\%--9\% tranches have a running spread of 500bps and the quote are upfronts in percent.

\subsubsection{Global calibration}

Table \ref{table-calib-global} and figure \ref{fig-calib-global}\footnote{We plot the quantity $\frac{s T + U}{1+sT/2}$, where $s$ is the spread quote (tranches 4 and 5) or the running spread (5\% for tranches 1, 2 and 3), $T$ is the maturity in years and $U$ the upfront quote for tranches 1, 2 and 3. Forgetting about discount factors and with the hypothesis that on a given tranche the loss per unit of time is constant, we can write the upfront in term of the tranche loss $L$ and running spread $s$ as $U\sim L-sT\left(1-\frac{L}{2}\right)$. This gives an approximation of the tranche loss: $L \sim \frac{s T + U}{1+sT/2}$.} show market quotes and the result of a global calibration on 5Y, 7Y and 10Y instruments (maturing on June 20th, 2013, 2015 and 2018). We set the recovery rate to 40\%.

\newcommand{\withrunningspread}[2]{$\displaystyle\mathop{#1}_{\tiny #2}$}
\begin{table}[p]
\centering
\begin{tabular}{|c|ccc|ccc|}
\cline{2-7}
\multicolumn{1}{c|}{}
& \multicolumn{3}{c|}{Market}
& \multicolumn{3}{c|}{Model}
\\
\multicolumn{1}{c|}{}
& 5Y & 7Y & 10Y
& 5Y & 7Y & 10Y
\\\hline
Index &\withrunningspread{102.505\%}{165}
&\withrunningspread{103.487\%}{170}
&\withrunningspread{104.985\%}{175}
&102.505\%&103.487\%&104.985\%
\\ 
0\%--3\% &\withrunningspread{36.81\%\ (1.06)}{500}&\withrunningspread{45.81\%\ (1.06)}{500}&
\withrunningspread{52.50\%\ (1.06)}{500}
&36.10\%&45.52\%&52.94\%
\\ 
3\%--6\% &\withrunningspread{2.83\%\ (1.06)}{500}&\withrunningspread{8.06\%\ (1.12)}{500}&\withrunningspread{14.88\%\ (1.12)}{500}
&1.49\%&8.55\%&14.93\%
\\
6\%--9\% &\withrunningspread{-6.95\%\ (0.88)}{500}&\withrunningspread{-5.59\%\ (0.88)}{500}&\withrunningspread{-1.94\%\ (0.88)}{500}
&-7.44\%&-5.13\%&-2.33\%
\\
9\%--12\% &147.75\ (12)&196.13\ (12)&247.00\ (11.5)
&127.20&200.01&259.85
\\
12\%--22\% &58.75\ (8)&81.88\ (8)&98.88\ (8)
&54.92&80.48&105.07
\\ \hline
\end{tabular}
\caption{\textit{Market quotes for iTraxx Europe Series 9, index and standard tranches on September 30, 2009 and model results after a global calibration. Index quotes are given as 100\%$-$upfront, the three first tranches are quoted upfront in percent; the running spreads in bp are given in small figures below quotes. The last two tranches have quotes in spread, in bp. The numbers in parentheses are bid-ask spreads.}}
\label{table-calib-global}
\end{table}

\begin{figure}[p]
\centering
\includegraphics[width=\textwidth]{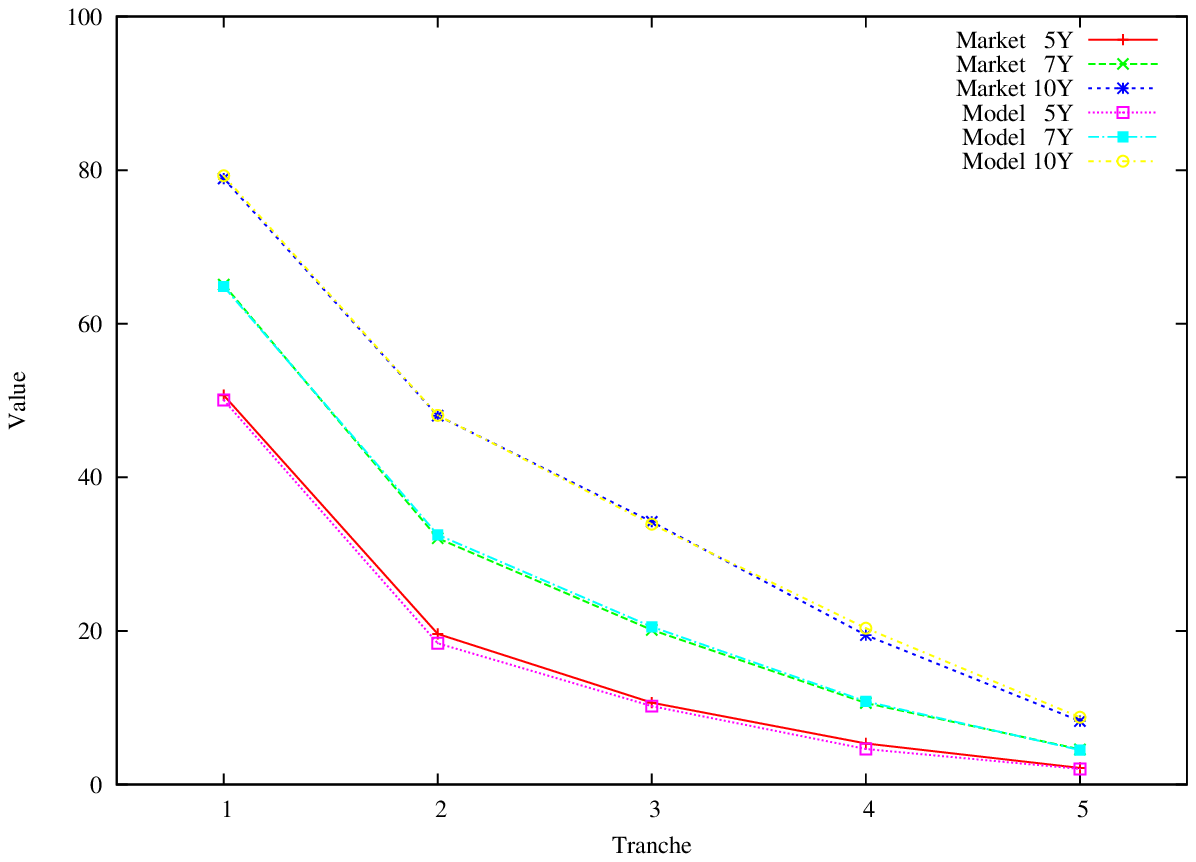}
\caption{\textit{Comparison of market and model values after a global calibration for iTraxx Europe Series 9 standard tranches on September 30, 2009. The value plotted is the quantity $\frac{s T + U}{1+sT/2}$, in percent, where $s$ is the spread quote in percent (tranches 4 and 5) or the running spread (5\% for tranches 1, 2 and 3), $T$ is the maturity in years and $U$ the upfront quote in percent for tranches 1, 2 and 3.}}
\label{fig-calib-global}
\end{figure}

\FloatBarrier

The parameters obtained by a global calibration on all maturities are the following:
\begin{center}
\begin{tabular}{cc}
$\lambda_0$ & 7.115 \\
$\lambda_\infty$ & 0.1846 \\
$\kappa$ & 4.303 \\
$\sigma$ & 0.9085 \\
\end{tabular}\hspace{2cm}
\begin{tabular}{cc}
$n$ & 9 \\
$\gamma$ & 0.08774 \\
$\theta$ & 5.704 \\
$\alpha$ & 1.377 $10^{-15}$ \\
\end{tabular}\hspace{2cm}
\begin{tabular}{cc}
$\beta$ & 0.005865 \\
$a_1$ & 0.766 \\
$a_2$ & 1.268 \\
$a_3$ & 1.111\rlap{\ .}\\
\end{tabular}
\end{center}
$a_1$, $a_2$ and $a_3$ are the slopes of the piecewise affine function $t=t(\tau)$ used to match index CDS quotes. Calibrating eight free parameters plus three slopes we manage to fit exactly the three index quotes and get a reasonable fit of the 15 CDO market quotes. Seven CDO quotes fall precisely within the bid-ask spread and the other are a little outside.

\subsubsection{Single maturity calibration}

In order to test the model ability to match all quotes for a given maturity, we keep five free parameters and freeze three quantities to the values obtained in the global calibration: $\frac{\lambda_\infty}{\kappa} = 0.04289$, $\frac{\sigma^2}{\kappa \lambda_\infty} = 0.5195$ and $\alpha=0$. We calibrate on CDO tranches of one single maturity but we still match the index quotes of the three maturities using the piecewise affine time change. The results are shown in table \ref{table-calib-single} and figure \ref{fig-calib-single}. All quotes are perfectly matched within the bid-ask spread. The model parameters for each maturity are the following:
\begin{center}
\begin{tabular}{cccc}
&5Y&7Y&10Y \\
$\lambda_0$ & 1.013 & 1.815 & 6.379\\
$\lambda_\infty$ & 0.01748 & 0.03061 & 0.3463\\
$\kappa$ & 0.4076 & 0.7135 & 8.075\\
$\sigma$ & 0.06084 & 0.1065 & 1.205\\
$n$ & 4 & 17 & 34\\
$\gamma$ & 0.1049 & 0.09124 & 0.07845\\
$\theta$ & 1.622 & 0.6226 & 3.131\\
$\alpha$ & 0 & 0 &0\\
$\beta$ & 0.004045 & 0.005947 & 0.006013\\
$a_1$ & 0.996 & 0.851 & 0.834\\
$a_2$ & 1.104 & 1.193 & 1.215\\
$a_3$ & 0.936 & 1.056 & 1.063\rlap{\ .}\\
\end{tabular}
\end{center}

\begin{table}[p]
\centering
\begin{tabular}{|c|ccc|ccc|}
\cline{2-7}
\multicolumn{1}{c|}{}
& \multicolumn{3}{c|}{Market}
& \multicolumn{3}{c|}{Model}
\\
\multicolumn{1}{c|}{}
& 5Y & 7Y & 10Y
& 5Y & 7Y & 10Y
\\\hline
Index &\withrunningspread{102.505\%}{165}
&\withrunningspread{103.487\%}{170}
&\withrunningspread{104.985\%}{175}
&102.505\%&103.487\%&104.985\%
\\ 
0\%--3\% &\withrunningspread{36.81\%\ (1.06)}{500}&\withrunningspread{45.81\%\ (1.06)}{500}&
\withrunningspread{52.50\%\ (1.06)}{500}
&36.78\%&45.80\%&52.52\%
\\ 
3\%--6\% &\withrunningspread{2.83\%\ (1.06)}{500}&\withrunningspread{8.06\%\ (1.12)}{500}&\withrunningspread{14.88\%\ (1.12)}{500}
&2.79\%&8.05\%&14.90\%
\\
6\%--9\% &\withrunningspread{-6.95\%\ (0.88)}{500}&\withrunningspread{-5.59\%\ (0.88)}{500}&\withrunningspread{-1.94\%\ (0.88)}{500}
&-6.98\%&-5.59\%&-1.92\%
\\
9\%--12\% &147.75\ (12)&196.13\ (12)&247.00\ (11.5)
&148.44&196.29&245.97
\\
12\%--22\% &58.75\ (8)&81.88\ (8)&98.88\ (8)
&57.61&81.40&100.31
\\ \hline
\end{tabular}
\caption{\textit{Market quotes for iTraxx Europe Series 9, index and standard tranches on September 30, 2009 and model results after one calibration for each maturity.  Index quotes are given as 100\%$-$upfront, the three first tranches are quoted upfront in percent; the running spreads in bp are given in small figures below quotes. The last two tranches have quotes in spread, in bp. The numbers in parentheses are bid-ask spreads.}}
\label{table-calib-single}
\end{table}

\begin{figure}[p]
\centering
\includegraphics[width=\textwidth]{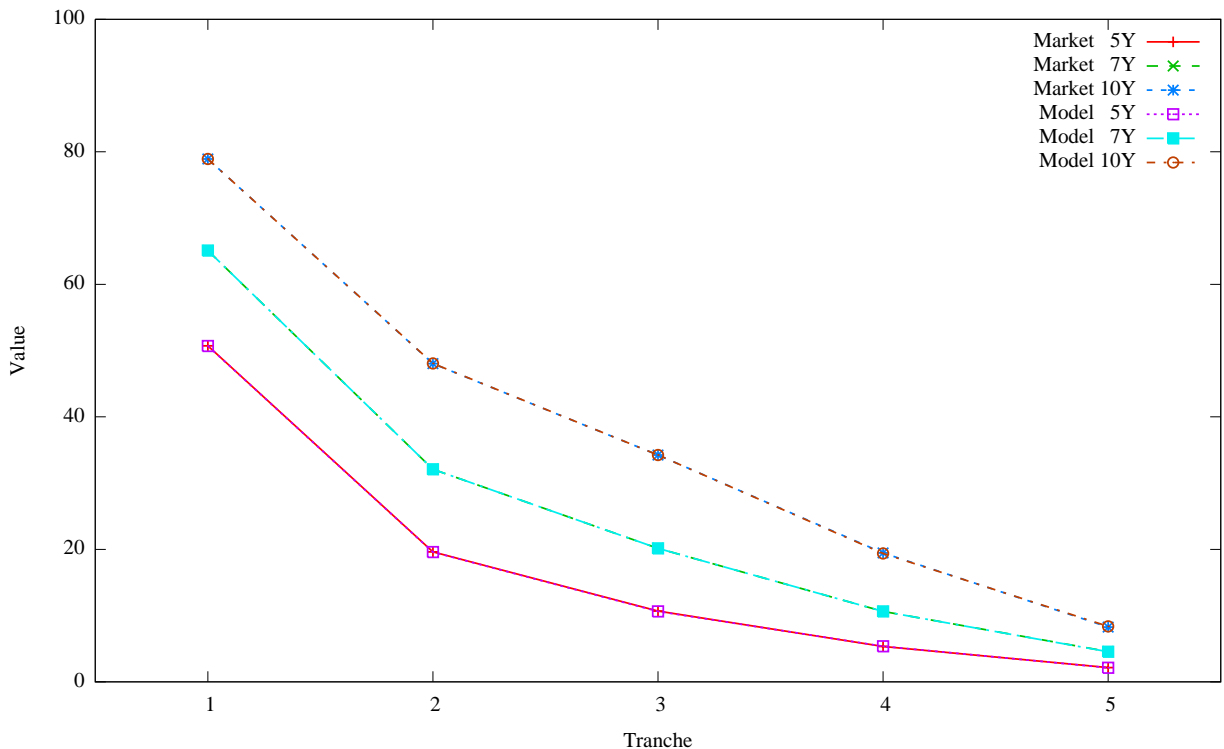}
\caption{\textit{Comparison of market and model values after one calibration for each maturity for iTraxx Europe Series 9 standard tranches on September 30, 2009. The value plotted is the quantity $\frac{s T + U}{1+sT/2}$, in percent, where $s$ is the spread quote in percent (tranches 4 and 5) or the running spread (5\% for tranches 1, 2 and 3), $T$ is the maturity in years and $U$ the upfront quote in percent for tranches 1, 2 and 3.}}
\label{fig-calib-single}
\end{figure}

\FloatBarrier

\section{Conclusion}
We have presented a model for credit index derivatives with the following properties: dynamic, with volatility and jumps on the loss intensity, allowing closed or semi-closed pricing of CDS, CDO tranches, N\textsuperscript{th}-to-default and options. It is also possible to price options on tranches or other exotic products since forward conditional distributions can be computed. 

In addition to default risk, our model handles volatility of the index spread and jump events, which model possible crises. Beyond pricing, this can be useful for risk management. Counterparty risk can also be included.

It is possible to match exactly index prices while matching reasonably CDO tranches quotes. If only one maturity is considered for CDOs, we obtained a perfect calibration within the bid-ask spread. It would be interesting to calibrate jointly on option prices to fix the remaining parameters (mean reversion and volatility in particular) and get robust sets of parameters.

It would be interesting to study our model in the spirit of \cite{cont2008dynamic} to see how it performs with respect to hedging of default risk and spread risk: it contains several properties that the authors find desirable, in particular the presence of jumps in the spread, not caused by the default of one of the constituents.

%\pagebreak

%\begin{appendices}

%\end{appendices}

\pagebreak
\bibliographystyle{alpha}
\bibliography{credit}

\end{document}